\newcommand{\floor}[1]{\lfloor #1 \rfloor}
\newcommand{\mb}[1]{\mathbf{#1}}
\newcommand{\bs}[1]{\boldsymbol{#1}}
\newcommand{\s}[1]{\mathcal{#1}}
\newcommand{\mbb}[1]{\mathbb{#1}}
\newcommand{\n}[1]{\mathrm{#1}}
\title{\LARGE Multiresolution network models\\[10pt]}
\author{Bailey K. Fosdick\footnote{Authors listed in alphabetical order.  Contact emails: bailey.fosdick@colostate.edu and tylermc@uw.edu.}\\{\small Colorado State University} \and Tyler H. McCormick\\{\small University of Washington} \and Thomas Brendan Murphy\\{\small University College Dublin} \and Tin Lok James Ng\\{\small University College Dublin} \and Ted Westling\\{\small University of Washington}}
\date{}
\begin{document}

\maketitle

\begin{abstract}
Many existing statistical and machine learning tools for social
network analysis focus on a single level of analysis.  Methods
designed for clustering optimize a global partition of the graph,
whereas projection based approaches (e.g. the latent space model in
the statistics literature) represent in rich detail the roles of
individuals.  Many pertinent questions
in sociology and economics, however, span multiple scales of analysis.  Further, many questions involve comparisons across disconnected graphs that will, inevitably  be of different sizes, either due to missing data or the inherent heterogeneity in real-world networks.  We propose a class of network models that represent network structure on multiple scales and facilitate comparison across graphs with different numbers of individuals.  These models differentially invest
modeling effort within subgraphs of high density, often termed
communities, while maintaining a parsimonious structure between said
subgraphs.  We show that our model class is projective, highlighting an ongoing
discussion in the social network modeling literature on the dependence of 
inference paradigms on the size of the observed graph. 
We illustrate the utility of our method using data on household relations from Karnataka, India.\\
\\
{Keywords:} latent space, multiscale, projectivity, social network, stochastic blockmodel
\end{abstract}

\doublespacing

\section{Introduction}
Social network data consist of a sample of actors
and information on the presence/absence of pairwise relationships 
among them.  These data are often represented as a graph where nodes correspond to actors and edges (ties) connect nodes with a relationship.  A relationship may represent, for example, friendship between students, co-authorship between academics on a journal article, or a financial transaction between organizations. Understanding structure in social networks is essential to appreciating the nuances of human behavior and is an active area of research in the social sciences \citep{borgatti2009network}. Existing statistical models for social networks typically focus on either (i) carefully representing structure among actors 
that have a relatively high likelihood of interaction, or (ii) clearly differentiating between groups of actors, i.e. communities, within the graph that have high within-group connectivity and low between-group connectivity (see e.g.~\citet{salter2012review} for a review).  Unfortunately, neither of these approaches fully characterize the complexities displayed in many real-world social networks.  
 
 Observed graphs frequently exhibit a mixture structure that manifests through a combination of global sparsity and local density. Global sparsity implies that the propensity for a tie between any two randomly selected actors is incredibly small.  Yet, massive heterogeneity in the propensity for actors to connect often creates local graph structure concentrated in dense subgraphs, frequently termed \emph{communities}.  This structure is typically
 particularly pronounced in very large graphs.  For example, in the context of online communication networks,~\citet{ugander2011anatomy} describe the Facebook graph as containing pockets of ``surprisingly dense'' structure, though overall the graph is immensely sparse.  

In this paper, we propose a multiresolution model for capturing heterogeneous, complex structure in social networks that exhibit strong community structure. 
Our modeling framework decomposes network structure into a component that describes between-community relations, i.e. relations between actors belonging to different communities, 
and another component describing within-community relations.    
The proposed framework has two distinct advantages over existing methods.  First, our framework is able to 
accommodate a wide variety of models for between- and within-community relations.  This feature allows the model to be tailored to reflect different scientific questions that arise when exploring the behavior within and across these communities. The second advantage of our model is that it balances parsimony with model richness by selectively directing modeling efforts towards representing interesting, relevant network structure.  Typically, this structure is found within actors' local communities.  In such cases, we can exert the most modeling effort (i.e. model complexity and computational effort) within dense pockets, where we expect the most complex dependence structure, and use a parsimonious model to capture between community patterns.  A similar approach has been adopted in spatial statistics where locations are partitioned into disjoint dependence neighborhoods \citep{page2016spatial}. Compared to popular network models that capture global structure, our approach 
can provide increased resolution on intricate structure within communities.  Furthermore, our model is able to apportion little effort to modeling simple structure, resulting in a model that is substantially less complex than existing models focused on local structure for networks, even with only a few hundred actors.

After defining our model framework, we discuss its statistical properties.  In exploring these properties, we take a traditional sampling perspective and consider our observed network as that pertaining to a collection of actors sampled from an infinite population of actors.  Our goal is to learn features of the infinite population from the observed graph.  
In our model, these features include the distribution of 
within-community structure across the network. Communities are defined by their structure and may not have a consistent size.  Therefore, inference about the population-level parameters requires that we be able to coherently compare and summarize parameters associated with subgraphs of different sizes.  We may also desire to compare  network-level parameters to those from another network of a different size and a different number of communities.  In order for these properties to hold and comparisons to be meaningful, the model class must be a \emph{projective family}, in the ``consistency under sampling'' sense of~\citet{shalizi2013consistency}. We introduce this concept and show the class of multiresolution network models proposed have this property.  We also discuss the important implications of this for population inference.  

In the remainder of this section, we explore the two existing approaches to multiresolution modeling of networks, highlighting their strength and weaknesses.  In  
Section~\ref{sec:multires}, we introduce the general form of our multiresolution modeling framework, and in Section~\ref{sec:LS-SBM} we present one of many possible model instantiations, called the Latent Space Stochastic Blockmodel (LS-SBM). Section~\ref{sec:properties} describes the projectivity properties of this framework and provides context through comparison with other available methods. Finally, we conclude with a discussion in Section~\ref{sec:discuss}.

\subsection{Related models}

\label{sec:related}
In this section, we describe two existing models, the Latent Position Cluster Model~\citep{handcock2007model} and the Locally Dependent Exponential Random Graph Model~\citep{schweinberger2014local}, which  capture aspects of our multiresolution approach.  

The Latent Position Cluster Model (LPCM) of \citet{handcock2007model} is an extension of the latent geometry framework introduced in \citet{HRH02}, where the probability of network ties is a function of the distance between actor positions in a latent space.  The LPCM performs model-based clustering~\citep{fraley2002model} on the positions in the unobserved social space.  Cluster memberships then capture group structure and within-group analysis is performed by examining the actor-specific latent positions within each cluster.  The likelihood for the LPCM, like that for the original latent space models, requires estimating a distance between every two pair of actors in the unobserved social space.  In even moderately large 
graphs, these distance calculations are computationally expensive and the propensity for actors in different groups to interact is often very small.  In addition, since both ties and non-ties are weighted equally by the LPCM, the latent position for each node is heavily influenced by the numerous other nodes with which it has no relation.

An attractive property of the LPCM is that it parsimoniously encodes patterns among ties in the network using a low-dimensional structure. As a consequence of this, the model 
tie probabilities are constrained by the
latent geometry.  Often these constraints are seen as model features; for example, the triangle inequality encourages transitivity, which is known to be prevalent in empirical social networks.  However, these constraints also restrict the types of network structure that can be represented. 
Consider a two-dimensional Euclidean latent space and suppose there are four groups of actors 
such that each pair of actors which belong to different groups interact with the same probability. To model this type of between group structure in the latent space, all groups need to be positioned equidistant from one another. Unfortunately, it is impossible to place four points in $\mathbb{R}^2$ equidistant from one another.  The dimension of the latent space could be increased to accommodate this structure.  However, ultimately a $K-1$ dimensional space is required to model all possible relationships among $K$ groups and continually increasing the latent dimension greatly complicates the model.

The second recent and related model is the locally dependent Exponential Random Graph Model proposed in \citet{schweinberger2014local}.  Exponential Random Graph Models (ERGMs) (e.g. \citet{frankmarkov1986,wasserman1996logit,pattison1999logit,snijders2002markov,robins2007introduction, koskinen2009, robins2011exponential,chatterjee2013estimating}) use
graph 
statistics, such as the number of closed triangles, as sufficient statistics in an exponential family. 
\citet{schweinberger2014local} define local dependence on a graph as a decoupling of the dependence structure such that dependence exists only among ties within the same community and among ties between the same two communities. 
~\citet{schweinberger2014local} draw parallels to local (dependence) neighborhoods in spatial statistics and M-dependence in time series.  

A key feature of locally dependent ERGMs is that they are composed of ERGMs.  \citet{shalizi2013consistency} showed that ERGMs are not projective when the sufficient statistics involve more than two nodes. Lacking projectivity implies that the value of the model parameters changes meaning depending on the sample size.  As a result, when partial communities are observed (that is, some of the actors in various communities are not included in the sample), the parameter estimates from locally dependent ERGMs are difficult to interpret.
Further, since the ties within each community are modeled using an ERGM, it is not possible to compare parameters across communities within the same graph unless the communities happen 
to be the same size.  Locally dependent ERGMs are in fact projective if the sampling units 
are taken to be \emph{communities} rather than actors. \citet{schweinberger2014local} calls this limited form of projectivity \emph{domain consistency}.  
While our proposed model uses a similar decomposition across subgraphs as~\citet{schweinberger2014local}, we model within- and between-community structure using latent variable mixture models and show the model class we define is projective when indexed by actors, broadening the notion of local dependence and alleviating the challenges with interpretation and comparison.

\section{Multiresolution network model}
\label{sec:multires}
In this section, we propose a general modeling framework that reflects the global sparsity and local density, ``chain of islands''~\citep{cross2001knowing}, structure observed in many large networks.

Consider a hypothetical infinite population of actors and communities, where each actor is a member of a single community.  Define $\gamma: \mbb{N} \to \mbb{N}$ to be the community membership map, which partitions the actors into disjoint communities. That is, $\gamma(i) = \gamma_i$ is the community of actor $i$. Define $K_N = | \gamma(\{1, \dotsc, N\}) |$ to be the number of unique communities among actors $\{1,...,N\}$. 
The community map is only meaningful up to relabellings of the communities.  Without loss of generality we require that $\gamma_i \leq K_{i-1} + 1$ (defining $K_0 = 0$) so that actors $1, \dotsc, N$ span communities $1, \dotsc, K_N$.  In practice the community memberships are typically estimated from the data, though in certain circumstances it can be defined \emph{a priori} from known structural breaks in the network (see~\citet{sweet2013hierarchical} for an example). 

Let $S_k = \{i : \gamma_i = k\}$ be the collection of actors in community $k$ in the population. We assume that  the number of actors in each community, e.g. $|S_k|$, is bounded, implying that $K_N = O(N)$ as $N\to\infty$.  Our assumption of bounded communities is supported by empirical evidence that suggests the ``best" communities contain small sets of actors, which are almost disconnected from the rest of the network \citep{leskovec2009community} and by psychologists and primatologists who have proposed a limit on the size of human social networks (e.g. \citet{dunbar1998social}).  This fact allows us to strategically allocate modeling effort within a large graph to be concentrated on a relatively small portion of dyads.  Since we postulate that the structure of within-community relations will
typically be most complex and interesting, we desire a model flexible enough to differentially devote modeling effort to those relations.  

Restricting community sizes to be bounded is also consistent with~\citet{schweinberger2014local}, though their motivation is quite different. Following similar justification as in the time series and spatial contexts, ~\citet{schweinberger2014local} define a decomposition of the graph such that the propensity to form ties between any set of nodes depends only on a finite number of other nodes. Along with global sparsity, the finite communities assumption in~\citet{schweinberger2014local} 
facilitates 
their asymptotic normality results for graph statistics.      

The network ties among a sample of $N$ actors can be represented as an $N \times N$ symmetric matrix $\mbb{Y}_{N}$, with $(i,j)$ entry $y_{ij} \in \{0,1\}$ denoting the absence or presence of a tie between actors $i$ and $j$.  We focus on undirected relations, restricting $y_{ij} = y_{ji}$ and assume the relation between an actor and itself, $y_{ii}$, is undefined. The partition $\gamma$ of the actors then induces a partition of the network $\mbb{Y}_N$ into \emph{blocks} $\mbb{Y}_{N,kl} = \{Y_{ij} : 1 \leq i < j \leq N, i \in S_k, j \in S_l\}$. We call $\mbb{Y}_{N,kk}$ a \emph{within-community} block and $\mbb{Y}_{N, kl}$, where $k \neq l$, a  \emph{between-community} block.  We define \emph{multiresolution network models} as the class of distributions over $\mbb{Y}_N$ such that, for a specific vector of $\gamma$'s, each distribution in the class can be expressed
\begin{equation} 
\small
\hspace{-.25in} P_{\gamma, \alpha, \omega, N}(\mbb{Y}_N) = \prod_{k = 1}^{K_N} W_{\alpha}( \mbb{Y}_{N,kk}) \prod_{k= 1}^{K_N - 1} \prod_{l=k+1}^{K_N}B_{\omega}(\mbb{Y}_{N,kl}) \label{eq:multires}
 \end{equation}
where $W$ is the probability distribution depending on $\alpha$ associated with the within-community model, and $B$ is the  probability distribution depending on $\omega$ associated with the between-community model. The population parameters $\alpha$ and $\omega$ characterize the distribution of within-block and between-block structure, respectively.

Furthermore, we express the within-community and between-community probability distributions as mixture distributions
\begin{align}
W_{\alpha}( \mbb{Y}_{N,kk}) &= \int W( \mbb{Y}_{N,kk}|\eta_k)\, dR_{\alpha}(\eta_k) \label{eq:within} \\
B_{\omega}(\mbb{Y}_{N,kl}) &= \int B(\mbb{Y}_{N,kl}|\tau_{kl})\, dS_{\omega}(\tau_{kl})
\label{eq:between}
\end{align}
and require the functional form of $B_\omega(\cdot)$ and $W_\alpha(\cdot)$ do not depend on the size of the network, $N$. 
Thus $\eta_k$ is the within-community random effect for community $k$ with random effect distribution $R_{\alpha}$ and $\tau_{kl}$ is the between-community random effect for the pair of communities $k, l$ with distribution $S_{\omega}$. The dimension of both $\eta_k$ and $\tau_{kl}$ do not depend on the sizes of the respective blocks as they come from common distributions $R_\alpha$ and $S_\omega$.  We assume that both $W$ and $B$ are projective, which is in contrast to the~\citet{schweinberger2014local} strategy of assuming ERGMs as the between and within block distributions, which are generally not projective.  Projectivity of $W$ and $B$ is essential for coherent inference based on the model; the importance of this property is detailed in Section~\ref{sec:properties}. 

In our probabilistic framework, we assume communities are exchangeable. That is, we assume that the community labels can be arbitrarily permuted and the probability of the network remains unchanged, or equivalently that there is no information about the social structure of the network contained in the specific values of the community labels.  We also assume that the node labels within each community are exchangeable.  
A familiar case where exchangeability does not hold is network data collected via snowball sampling, where nodes are progressively sampled by following ties in the network and nodes close together in the sampling order are likely to be connected. 

We call the models in \eqref{eq:multires} \emph{multiresolution} because the model parameters and random effects correspond to parameters at three resolutions. At the coarsest (global) level, $\gamma$ defines the distribution of community sizes. Related literature on community detection defines  
groups based on subgraph densities, such that actors have a higher propensity to interact within the group than between groups~\citep{newman2006modularity}, or based on \emph{stochastic equivalence}, where groups include actors that display similar interaction patterns to the rest of the network \citep{lorrain1971structural}. The most well-known models for community identification is the stochastic block model (SBM) and its variants~\citep{nowicki2001estimation, airoldi2008mixed, rohe2011spectral,choi2012stochastic,amini2013pseudo}.  While these methods distinguish between clusters of actors and their aggregate structure at a macro-level, they lack the ability to encode low-level structure such as transitivity, which manifests as triangles in the network. Recent extensions of SBMs include ``multiscale'' versions (e.g.~\citet{peixoto2014hierarchical}, \citet{lyzinski2015community}), which repeatedly subdivide clusters.  These models also fall within our general class. 

At the local level, $\alpha$ represents a (multivariate) population parameter determining the distribution of within-community structure across the population, where we might expect the richest structure. Recall the discussion in Section \ref{sec:related} about the LPCM and the inherent constraints on the model tie propensities due to the latent space embedding.  
When multiple dense pockets exist in a large, overall sparse network, the LPCM can sacrifice accuracy in characterizing within cluster structure to distinguish between the clusters. 
~\citet{vivar2012models} document a case 
involving baboon interactions where the separation of the dense communities in a troop dominates the latent space positioning, 
crowding out within group structure.
Our model resolves this issue by disentangling the modeling of within and between group relations.  In particular, $\omega$ characterizes the distribution of between-community relational structure across the population, 
and hence controls the overall sparsity and small-world properties of 
the network. 

While $\gamma$, $\alpha$, and $\omega$ summarize global structure, at a finer resolution, $\eta_k$ and $\tau_{kl}$ represent the unique structure present within and between specific communities. Finally, at the finest level of resolution, any actor-specific latent variables apart of each within-community distribution $W$ or each between-community relation distribution $B$ provide local representations of any involute structure. In the next section, we provide more concrete examples of these parameters in the context of popular network models.

The general class of multiresolution models defined in \eqref{eq:multires} contains a diverse set of possible model specifications.  The 
stochastic blockmodel \citep{holland1983stochastic}, for example, is a special case.  The stochastic blockmodel decomposes a network into communities and models the probability of a tie between any two actors as  solely a function of their community memberships.  In our formulation, $\eta_k$ would denote the tie probability between two individuals within community $k$ and $\tau_{kl}$ would denote the tie probability between an actor in community $k$ and an actor in community $l$.  Viewing these block-level probabilities as random effects, $R_\alpha$ and $S_\omega$ represent the mixing distribution governing the distributions of these effects, and $W$ and $B$ represent products of independent and identically distributed Bernoulli distributions.  We could also construct a model that nests stochastic blockmodels within one-another \citep{peixoto2014hierarchical}.  With this approach, we would represent both within- and between-community structure as a stochastic blockmodel.  Greater nesting depths might be specified for the within-community stochastic blockmodels to capture more complex patterns within communities.  Furthermore, separate random effects for sender and receiver effects could be added to each block, as in the social relations model~\citep{kenny1984social}.  In the following section, we explore another example of our model class, which we call the Latent Space Stochastic Blockmodel.

\section{Latent Space Stochastic Blockmodel}
\label{sec:LS-SBM}
In this section, we introduce a particular multiresolution network model, called the Latent Space Stochastic Blockmodel (LS-SBM). In the LS-SBM, the propensity for within-block ties is modeled with a latent space model and the between-block ties are modeled as in a stochastic blockmodel.  We denote the probability an actor belongs to block $k$ as $\pi_k$, and let $\boldsymbol{\pi} = (\pi_1,...,\pi_{K_N})$ denote the vector of membership probabilities, where $\sum_{i=1}^{K_N} \pi_i = 1$.

The LS-SBM utilizes a latent Euclidean distance model~\citep{HRH02} for the within-community distribution $W$. In this model the edges in $\mbb{Y}_{N,kk}$ are conditionally independent given the  latent positions 
of the actors in $S_k$. Specifically, given $\mb{Z}_i$ and $\mb{Z}_j$ where $i,j \in S_k$, $Y_{ij}$ is Bernoulli with probability $\n{logit}^{-1}(\beta_k - \|\mb{Z}_i - \mb{Z}_j\|)$, where $\|\cdot\|$ denotes the $\ell^2$-norm, i.e. Euclidean distance. The latent positions in group $k$ are themselves independent and identically distributed (IID) as spherically normal with mean zero and variance $\sigma^2_k$:  $\; N_D(\mb{Z}_i; 0,\sigma_k^2 I_D)$. Thus
\begin{equation} W (\mbb{Y}_{N, kk} | \eta_k)= \int \left(\prod_{i,j} G(Y_{ij}; \beta_k, \mb{Z}_i, \mb{Z}_j)\right) \, \left(\prod_i dN_{D}(\mb{Z}_i ; 0, \sigma_k^2 I_D)\right), \end{equation}
where $G$ is the Bernoulli distribution stated above and the products are taken with respect to all nodes in block $k$.

In the terminology of multiresolution models, $\eta_k \equiv (\beta_k, \log\sigma_k)$ is the within-community random effect governing the network structure within community $k$. $\beta_k$ can be interpreted as the maximum logit-probability of a relation in block $k$: two nodes $i$ and $j$ are stochastically equivalent in block $k$ if and only if $\mb{Z}_i = \mb{Z}_j$, in which case $P(Y_{ij} | \mb{Z}_i = \mb{Z}_j) = \n{logit}^{-1}(\beta_k)$. $\sigma_k$ is a measure of heterogeneity in block $k$, as $\sigma_k = 0$ is equivalent to an Erd\H{o}s-Reny\'i model with tie probability $\n{logit}^{-1}(\beta_k)$.  If $\sigma_k = 0$ for all blocks, then the multiresolution model reduces to the stochastic blockmodel.  We model $\eta_k$ for all $k= 1, \dotsc, K_N$ as samples from a bivariate normal with parameters $\alpha  = \{\bs\mu, \bs\Sigma\}$.  Thus, $
R_{\alpha}(\eta_k) = N_2((\beta_k, \log\sigma_k); \bs\mu, \bs\Sigma)$.

For the between-community distribution $B$ we use an Erd\H{o}s-Reny\'i model. That is, all edges between communities $k$ and $l$ are IID with probability $\tau_{kl}$. Thus
\begin{equation} B( \mbb{Y}_{N, kl} | \tau_{kl}) = \prod_{i \in S_k} \prod_{j \in S_l} \tau_{kl}^{Y_{ij}} (1-\tau_{kl})^{1 - Y_{ij}}. \end{equation}
We model ${\tau}_{kl}$ as Beta distributed with parameters $\omega = (a_0, b_0)$: $
S_{\omega}(\tau_{kl}) =  \n{Beta}(\tau_{kl}; a_0, b_0).$
The model maintains a parsimonious structure in modeling relationships between blocks, requiring only a single parameter, but is flexible in modeling ties within each block, allowing tie prevalence to depend on the distance between actors in the unobserved social space.

There are two key differences between the proposed LS-SBM and the latent position cluster model (LPCM) introduced in \citet{handcock2007model}.  The first key distinction is that the probability of a tie between actors belonging to different communities in the LS-SBM is a function of only their community memberships, whereas in the LPCM it is a function of the distance between the actor positions in the latent space.  This means that in the LPCM, all of an actor's ties and non-ties are used in determining the latent positions.  In contrast, in the LS-SBM, the latent space only affects within-community connections.  As a result, the structure of between-community connections are not constrained by the dimension and geometry of the latent space in the LS-SBM like they are in the LPCM.  The second distinction between the models is that the LPCM contains a single intercept parameter and the LS-SBM contains block-specific intercepts, $\beta_k$.  Each intercept $\beta_k$ can be interpreted as the maximum logit-probability of a tie in community $k$.  In practice, we find there is often large heterogeneity in this maximum probability across communities, suggesting having different intercepts is a critical piece of model flexibility in the LS-SBM.

\subsection{Prior specification}
\label{sec:priors}

Here we discuss the prior distributions for $\alpha$ and $\omega$. Our intended application of the LS-SBM is to networks where the within-community ties are denser than the between-community ties. We use the prior on $\alpha$ to reflect this knowledge.  Specifically, we set the prior on $\bs\mu$ and $\bs\Sigma$ to be a conjugate Normal-Inverse-Wishart distribution, with parameters $\{\mb{m}_0, s_0, \bs\Psi_0, \nu_0\}$, subject to an additional  \emph{assortativity restriction}. Given $a_0$ and $b_0$, the prior can be expressed
\begin{align}
\hspace{-.1in}P(\bs\alpha | a_0, b_0, \mb{m}_0, s_0, \bs\Psi_0, \nu_0) \propto&  N_2(\bs\mu; \mb{m}_0, \bs\Sigma_0 / s_0) \n{Inv.Wish}(\bs\Sigma; \bs\Psi_0, \nu_0)  \boldsymbol{1}(a_0,b_0,\bs{\mu}), \label{eq:prior}
\end{align}
where $\boldsymbol{1}(a_0,b_0,\bs{\mu})$ is the indicator function enforcing the assortativity condition. We fix $a_0$ and $b_0$ based on the observed density of the graph. The assortativity condition we require is that the (logit) marginal probability of a within-community tie for the \emph{average block}, induced by $\bs\mu$, be larger than the (logit) average between-block probability of a tie, $a_0/(a_0 + b_0)$:
\begin{equation}
E\Big[\mathrm{logit}(Pr(Y_{ij}=1))|\gamma_i = \gamma_j\Big] \ge E\Big[\mathrm{logit}(Pr(Y_{ij}=1))|\gamma_i \not= \gamma_j\Big]. \label{eq:assort} 
\end{equation}
Calculating these expectations (see the web-based supplementary materials for details), the restriction on the population parameter space we wish to enforce is
$$\mu_1 -  2 e^{\mu_2} \frac{\Gamma(\frac{D+1}{2})}{\Gamma(\frac{D}{2})} \ge  \psi(a_0) - \psi(b_0).$$ 
where $\psi(x)$ is the digamma function defined $\psi(x) = \frac{d \text{log}(\Gamma(x))}{dx}$. The digamma function is not available in closed form but is easily approximated with most standard statistical software packages. We proceed with estimation using this global assortativity restriction.

 \subsection{Estimation and block number selection}
 \label{sec:fitting}

Here we provide a brief summary of our estimation procedure for the LS-SBM. A full description of the model specification and algorithm are provided in the web-based supplementary materials. 
The posterior for our Bayesian model is not available in closed form, so instead we approximate the posterior using draws obtained via Markov chain Monte Carlo (MCMC).  The MCMC algorithm performs the estimation with the number of blocks $K$ fixed.  Thus, we first describe a procedure for choosing $K$ and then outline the MCMC procedure given the number of blocks.

We suggest comparing different $K$ using a series of ten-fold cross-validation procedures.  
For each repetition of the procedure, randomly partition the unordered node pairs into ten folds. For each fold, use assortative spectral clustering~\citep{saade2014spectral} on the adjacency matrix, excluding that fold, to partition the nodes into numbers of blocks $K$ from $2$ to $\floor{N/4}$.  To adapt the spectral clustering algorithm to deal with the held out, missing at random, edges we propose using an iterative EM-like scheme where first the missing values are imputed using observed degrees, clustering is performed on the imputed data, and the missing values from hold-out are re-imputed using the predicted probabilities. This should be repeated, until convergence.  This procedure does not require computing the full posterior and can be done in parallel for each value of $K$ and each validation fold.

For each repetition and $K$ value, we propose calculating three metrics of predictive performance: area under the ROC curve (AUC), mean squared error (MSE), and mean predictive information (MPI). Calculate the mean value and 95\% CIs for the mean of these criteria over the repetitions for each $K$. Then, for each criteria, we suggest finding the \emph{smallest} $K$ such that the mean value of the criteria for $K$ falls in the 95\% CI of the mean value of the $K$ with the best mean value (either maximal or minimal depending on the criterion).  

Once we have selected a value of $K$, we use a Metropolis-within-Gibbs algorithm to approximate the joint posterior distribution. Our use of conjugate priors allows Gibbs updates of $\bs\pi, \bs{\tau}, \bs\mu$, and $\bs\Sigma$. We update each $\eta_k$ with a Metropolis step, using a bivariate normal proposal distribution. 

Each node is assigned a single block membership at every iteration of the chain. This block membership is jointly updated with the node's latent position using a Metropolis step. We also take additional (unsaved) Metropolis steps for the latent positions in order to allow nodes which have switched blocks to find higher likelihood points in the latent space.

The likelihood is invariant to permutations of the block memberships and to rotations and reflections of the latent spaces. We address these non-identifiabilities by  post-processing the posterior 
samples using equivalence classes defined over the parameter spaces. See the supplementary materials for additional details. 

In our experiments, computation was feasible for networks with three hundred nodes in under two hours using a personal computer with a 2GHz processor and 8GB of RAM.  The most computationally expensive piece is the Metropolis step that jointly updates block membership and latent positions and then subsequently takes additional draws from latent spaces.  In our experiments, however, using a joint update substantially improved mixing.  Since we do not use MCMC to compute $K$, this is not a limiting step computationally.  
To scale our method beyond what is possible with MCMC, or for cases where the full posterior is not of interest, we provide a two-stage fitting procedure in the supplementary material.  This procedure uses an assortative graph clustering algorithm  to quickly estimate block membership, and variational inference to estimate parameters within each block. We have used this two-stage procedure to estimate our model in a sparse network with 13,000 nodes, which took about three minutes on a standard personal computer. The results of this analysis are provided in Section~\ref{sec:results}.  Additional details about the two-stage procedure are provided in the supplementary material.

\subsection{Karnataka villages}
\label{sec:results}

We estimate the proposed LS-SBM on data from a social network study consisting of households in villages in Karnataka, India to illustrate the utility of the model.  These data were collected as part of an experiment to evaluate a micro-finance program performed by~\citet{banerjee2013diffusion}.  Data consist of multiple undirected relationships between individuals and households in 75 villages.  Relationship types include social and familial interactions (e.g. being related or attending temple together) and views related to economic activity (e.g. lending money or borrowing rice/ kerosene).  

We used the household-level ``visit" relation from village 59, which has $N=293$ households with non-zero degree. We estimate the LS-SBM on the data with $K=6$.  Details of the cross-validation selection procedure for $K$ and LS-SBM estimation on this data are provided in the supplementary materials. Codes to replicate the results we present here are available at \url{https://github.com/tedwestling/multiresolution_networks.git}.  Data are available at \url{https://dataverse.harvard.edu/dataset.xhtml?persistentId=hdl:1902.1/21538}.

\begin{figure}[t]
\centering
\includegraphics[width=2.25in]{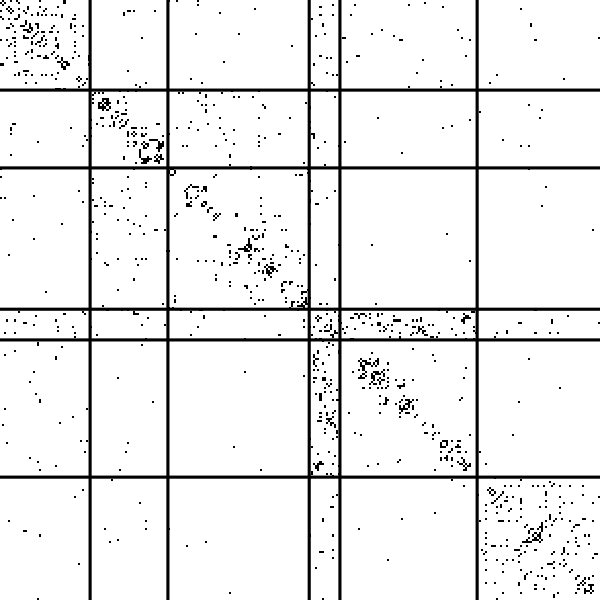}
\caption{Household-level ``visit" relation adjacency matrix of village number 59 from the Karnataka village dataset. Nodes are grouped by marginal posterior mode block membership. \label{fig:blocked_adjacency}}
\end{figure}

\begin{figure}[t]
\centering
\includegraphics[width=5.5in]{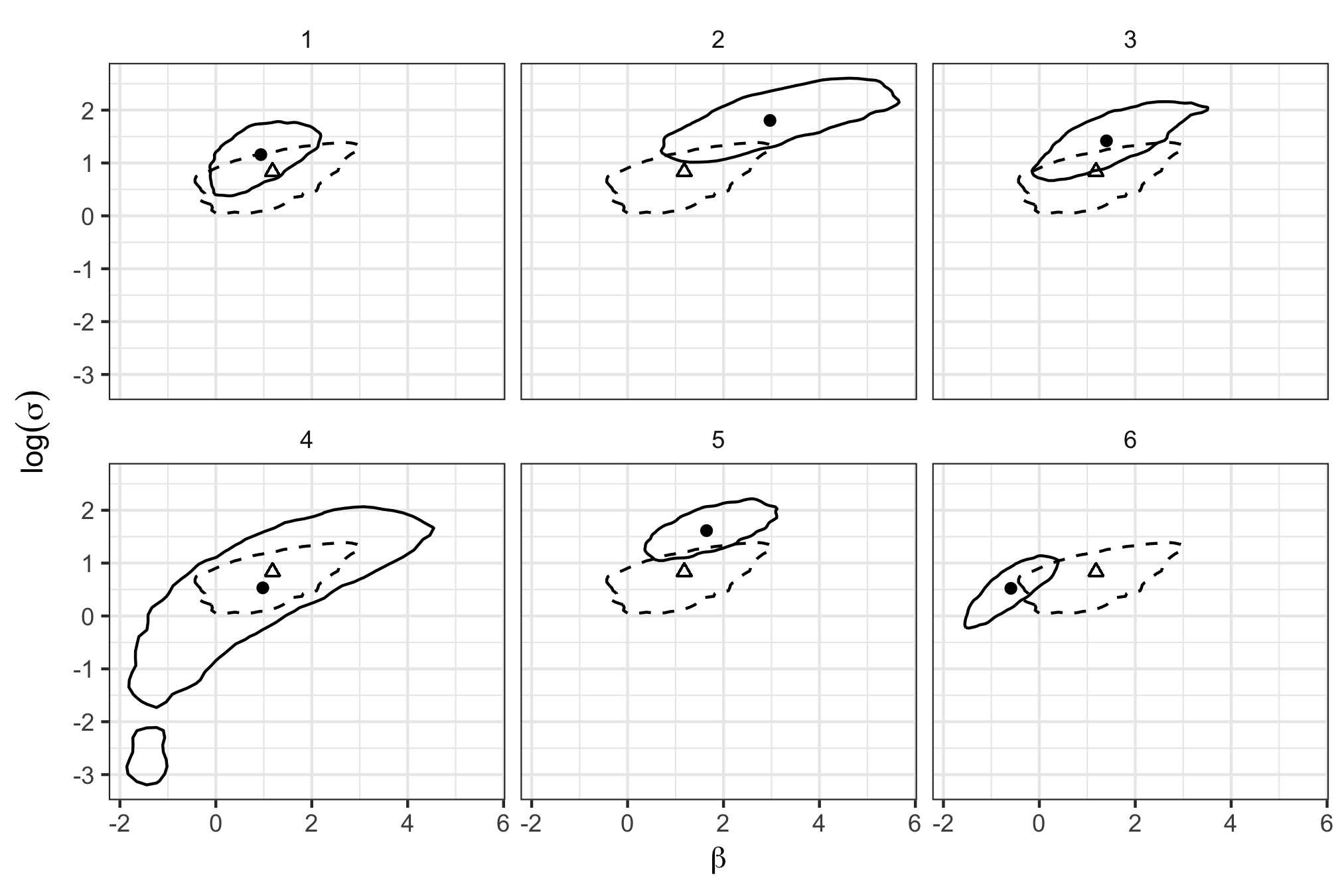}
\caption{Block-level and global parameter estimates and 95\% highest posterior density (HPD) regions. Each panel contains the posterior mean and 95\% HPD region for a single block parameter $(\beta, \log\sigma)$ (solid point and solid line) and the posterior mean and 95\% HPD of the global parameter $(\mu_1, \mu_2)$ (triangle and dashed line).}
\label{fig:block_parameters}
\end{figure}

\begin{figure}[t]
\centering
\includegraphics[width=6in]{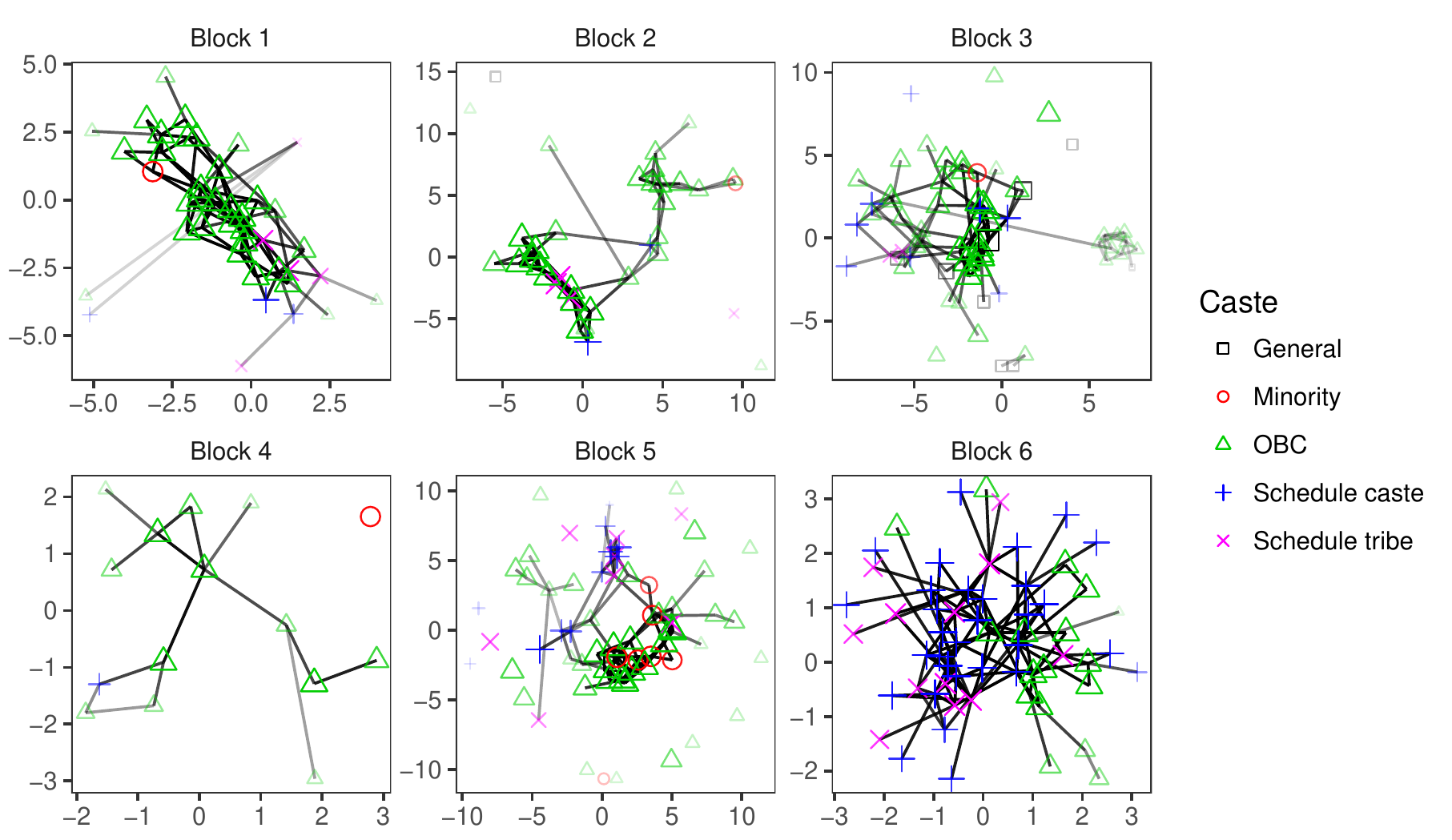}
\caption{Latent positions within each block.  Shading represents posterior probability of inclusion in a given block, with the lightest shading representing a posterior estimate of around 30\% and darkest colors representing values near unity. Colors and plot symbols differentiate the five caste categories.
\label{fig:latent_spaces}}
\end{figure}

The results of the estimation algorithm are displayed in Figures \ref{fig:blocked_adjacency}, \ref{fig:block_parameters}, and \ref{fig:latent_spaces}, and in Table \ref{tab:between_block_ests}. Figure~\ref{fig:blocked_adjacency} shows the observed adjacency matrix organized by marginal posterior mode block membership. The estimated block memberships result in an assortative network structure: there are more ties between households in the same community than between households in different communities. Table \ref{tab:between_block_ests} shows the posterior mean estimates of the between block connectivity parameters, again illustrating the assortative patterns. The within-community ties seen in Figure~\ref{fig:blocked_adjacency} are fairly clearly non-uniform, further justifying our departure from a SBM for the within-community model.

\begin{table}
\caption{Between block probability matrix.  The off-diagonal elements are the posterior mean probabilities of a tie between individuals in different blocks, $\tau_{\gamma_i\gamma_j}$.  The diagonal elements represent the maximum probability of a tie within each block based on the block-level parameter $\beta_k$ posterior mean: $e^{\hat{\beta}_k}/(1+e^{\hat{\beta}_k})$.  Values less than 0.01 are grayed out.}
\label{tab:between_block_ests}
\centering
\begin{tabular}{c||c|c|c|c|c|c}
     & 1 & 2 & 3 & 4 & 5 & 6\\ \hline\hline
     1& $\le$.719 &\cellcolor{lightgray}.006 &\cellcolor{lightgray}.004&.044 &.010 &\cellcolor{lightgray}.003 \\ \hline
     2& \cellcolor{lightgray}.006 &$\le$ .951&.018& .016&\cellcolor{lightgray}.006 &\cellcolor{lightgray}.003\\ \hline
     3&\cellcolor{lightgray}.004 &.018& $\le$ .802&.011&\cellcolor{lightgray}.003&\cellcolor{lightgray}.002\\ \hline
     4&.044 &.016&.011&$\le$ .727&.079&.021\\ \hline
     5& .010&\cellcolor{lightgray}.006&\cellcolor{lightgray}.003&.079&$\le$ .838&\cellcolor{lightgray}.002\\ \hline
     6&\cellcolor{lightgray}.003 &\cellcolor{lightgray}.003&\cellcolor{lightgray}.002&.021&\cellcolor{lightgray}.002& $\le$ .356\\ 
\end{tabular}
\end{table}

Figure~\ref{fig:block_parameters} shows the estimated block-level parameters $\eta_k = (\beta_k, \log \sigma_k)$ and the global mean $\boldsymbol{\mu}=(\mu_1, \mu_2)$.  Recall that $\beta_k$ is the intercept parameter for each block and $\log \sigma_k$ describes the variation in the latent space.  The $\mu_1$ and $\mu_2$ terms describe the mean of the distribution of $\beta_k$ and $\log \sigma_k$, respectively.  Since the multiresolution framework is projective, we can compare parameters between block-level parameters. Blocks two, five and six, denoted B2, B5 and B6, respectively, display strong \emph{a posteriori} differences from the mean block. In B2 the posterior distribution is shifted towards a larger intercept parameter, with posterior probability approximately 97\% that $\beta_2 > \mu_1$.  The larger intercept indicates that the overall propensity for ties is larger in B2.  In addition, the posterior probability that $\log \sigma_2$ is greater than the overall mean $\mu_2$ is greater than 99\%,  suggesting there is greater heterogeneity in tie probabilities in block two compared to the global mean. In B6, the $\beta_6$ term is shifted substantially lower than the overall mean $\mu_1$, indicating a smaller maximum tie probability.  The posterior probability that $\beta_6 < \mu_1$ is greater than 99\%.

We now further explore the structure implied within each block by examining the within-block latent positions.  Figure~\ref{fig:latent_spaces} shows the latent positions (after accounting for nonidentifiability issues) obtained from multidimensional scaling on the posterior mean distance matrix, along with the observed edges within each block.  The posterior places a distribution over block memberships for each node, however in Figure~\ref{fig:latent_spaces} we show nodes only in the block for which they have the largest posterior probability of inclusion.  Individuals that are unlikely to belong to any specific block (with inclusion probabilities less than 30\%) are omitted.  Shading represents the concentration of the posterior over block memberships, with darker colors indicating higher assignment probabilities. 

Moving to the structure within the blocks, we investigate the relation between the household memberships, positions, and caste which is a formalized social class system in India. Castes are denoted in Figure~\ref{fig:latent_spaces} using different colors.  We see strong sorting by caste, with almost all of the members of schedule castes and schedule tribes (the two lowest castes) being grouped into B6.  Members of the slightly higher class OBC (Other Backwards Caste) are the most common in the network and are spread throughout the remaining blocks.  An extensive literature in economics (e.g. \citet{townsend1994risk,munshi2009mobility, mazzocco2012testing, ambrus2014social}) explores on the role of the caste system in individuals' financial decisions.  In particular this literature focuses on informal credit markets.  That is, the social structures that provide financial support in times of need without a formal, corporate credit structure.  Recent work by~\citet{ambrus2014social} present a theoretical argument for the importance of ties that bridge otherwise disconnected groups.  In our results, these individuals would be individuals whose block assignment based on their social interactions does not match that of others in their caste.  For example, this group of bridging individuals would include members of schedule tribes or castes that are in blocks other than B6. 

We also used our two-stage procedure (described in detail in the supplementary material) to estimate our model for all 75 village networks combined. We formed an undirected network of $N=$ 13,009 nodes by combining all 75 household-level ``visit" relation networks from the Karnataka village data. We estimated the block structure using label propagation \citep{raghavan2007near}, which returned 534 blocks. Every block contained only households from a single village -- that is, there were no blocks containing households from multiple villages. This was expected since by design there are no between-village edges. There were a median of six blocks per village, with as few as one block per village (i.e.\ the entire village constitutes a single block) and as many as twenty blocks per village. The number of nodes per block varied considerably, with a median of fourteen, mean of 24.4, and maximum of 233. The density of edges within a block and nodes per block were well-described by a linear function on the log-log scale with intercept 0.34 and slope -0.82, as shown in the left panel of Figure~\ref{fig:all_village_plots}. The estimated within-block latent space parameters $\log\sigma$ and $\beta$ are shown in the right panel of Figure~\ref{fig:all_village_plots}. The larger blocks tend to be sparser and more heterogeneous, while the smaller blocks are more homogeneous. 

\begin{figure}[t]
\centering
\includegraphics[width=2.2in]{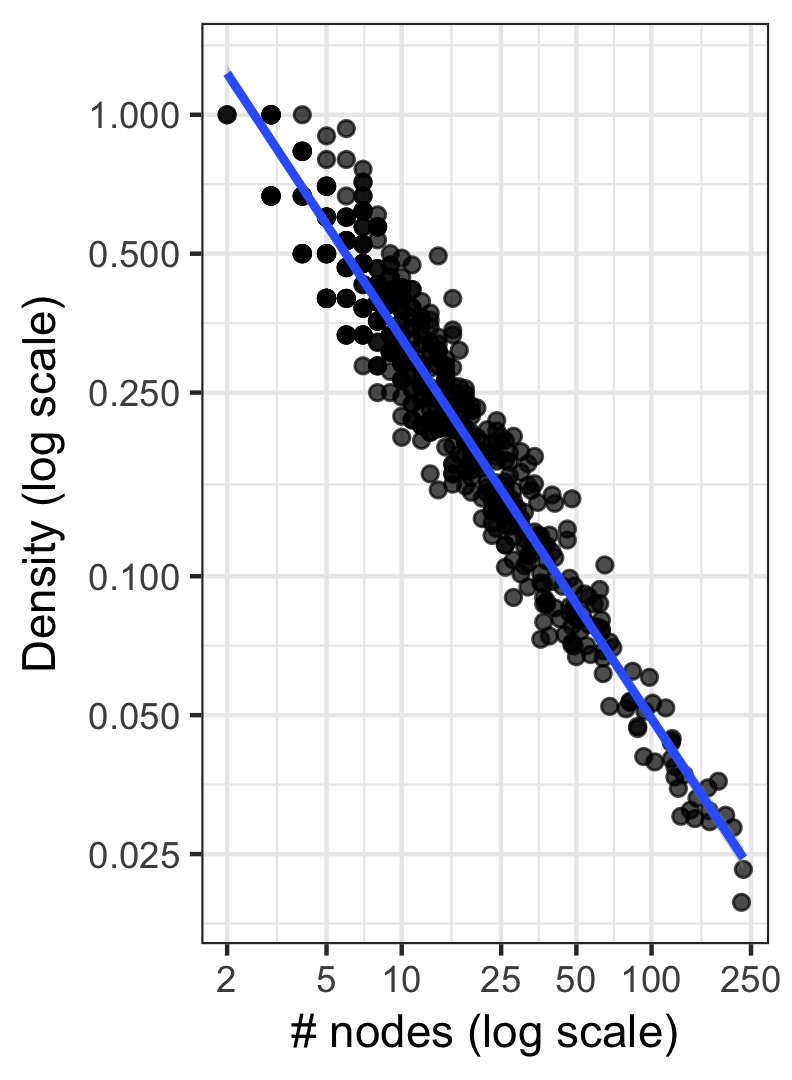}
\includegraphics[width=4.2in]{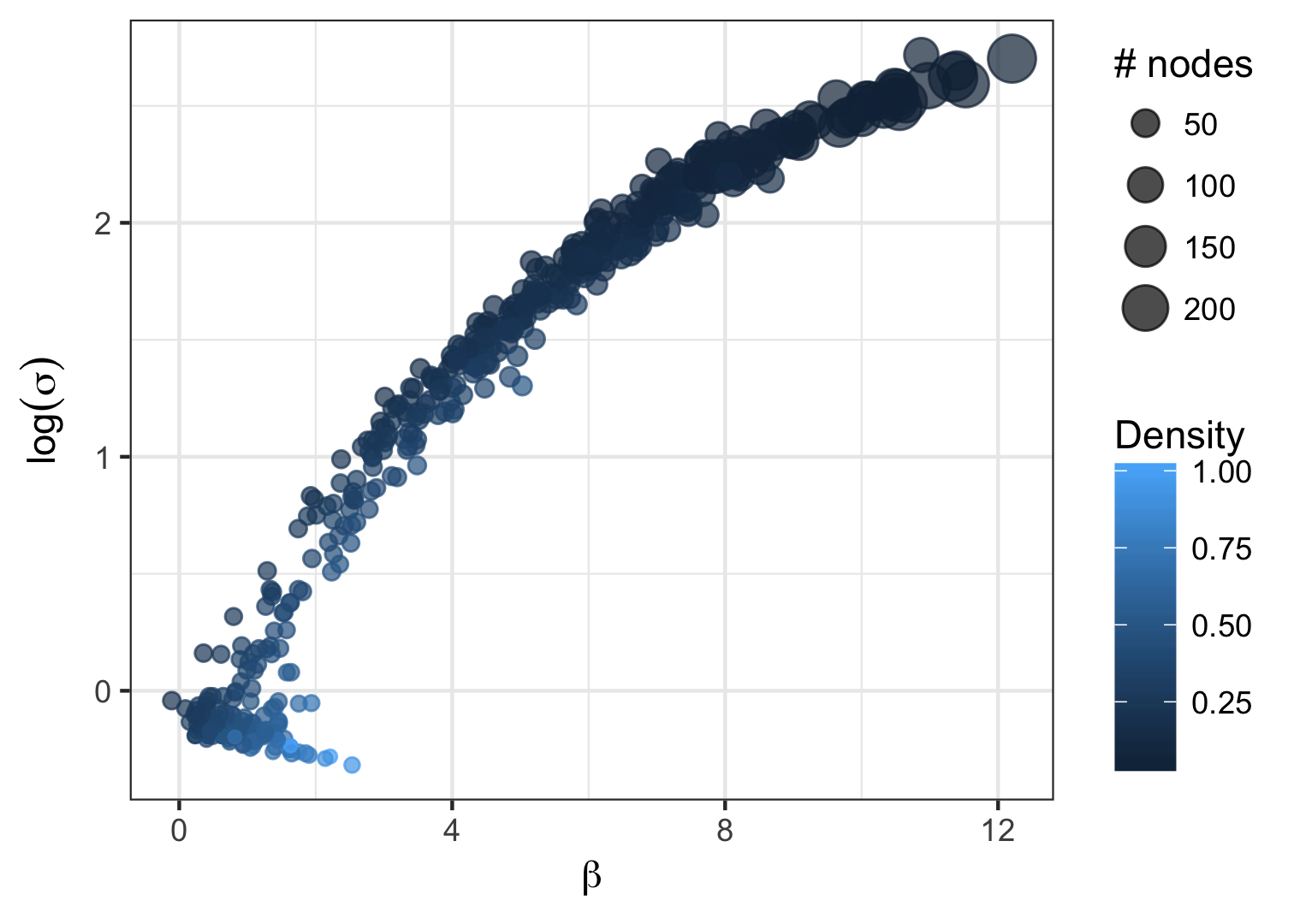}
\caption{The left panel shows the $\log_{10}$ block density as a function of $\log_{10}$ block size. The blue line is the OLS linear regression fit. The right panel shows the block-level latent space parameters $\beta_k$ and $\log(\sigma_k)$, where point size corresponds to block size and point color corresponds to block density. In both plots, each point is an estimated block from the model fit to all 75 Karnataka villages using the two-stage procedure.}
\label{fig:all_village_plots}
\end{figure}

\subsection{Simulation Study}

In this section, we detail a simulation study illustrating the advantages of using a multiresolution model like the LS-SBM over existing models such as the latent space model, LPCM and SBM.

Binary, undirected network data were generated for 300 nodes from the LS-SBM model with five equally-sized blocks such that the between-block tie probabilities were either $0.2$ or $0.02$. Within-block tie probabilities stemmed from a heterogeneous set of two-dimensional block-specific latent spaces.  Further details about the simulation parameters are provided in the supplementary materials. One thousand simulations were performed where ten percent of the undirected dyads in the network were held out in each simulation and the models were fit to the remaining ninety percent of the data.  Predictions were then made for the held out portion of the network and the accuracy of these predictions quantified by computing the area under the precision-recall curves.  The results are shown in Figure \ref{fig:sim}.

\begin{landscape}
\begin{figure}
    \centering
\includegraphics[width=1.2\textwidth]{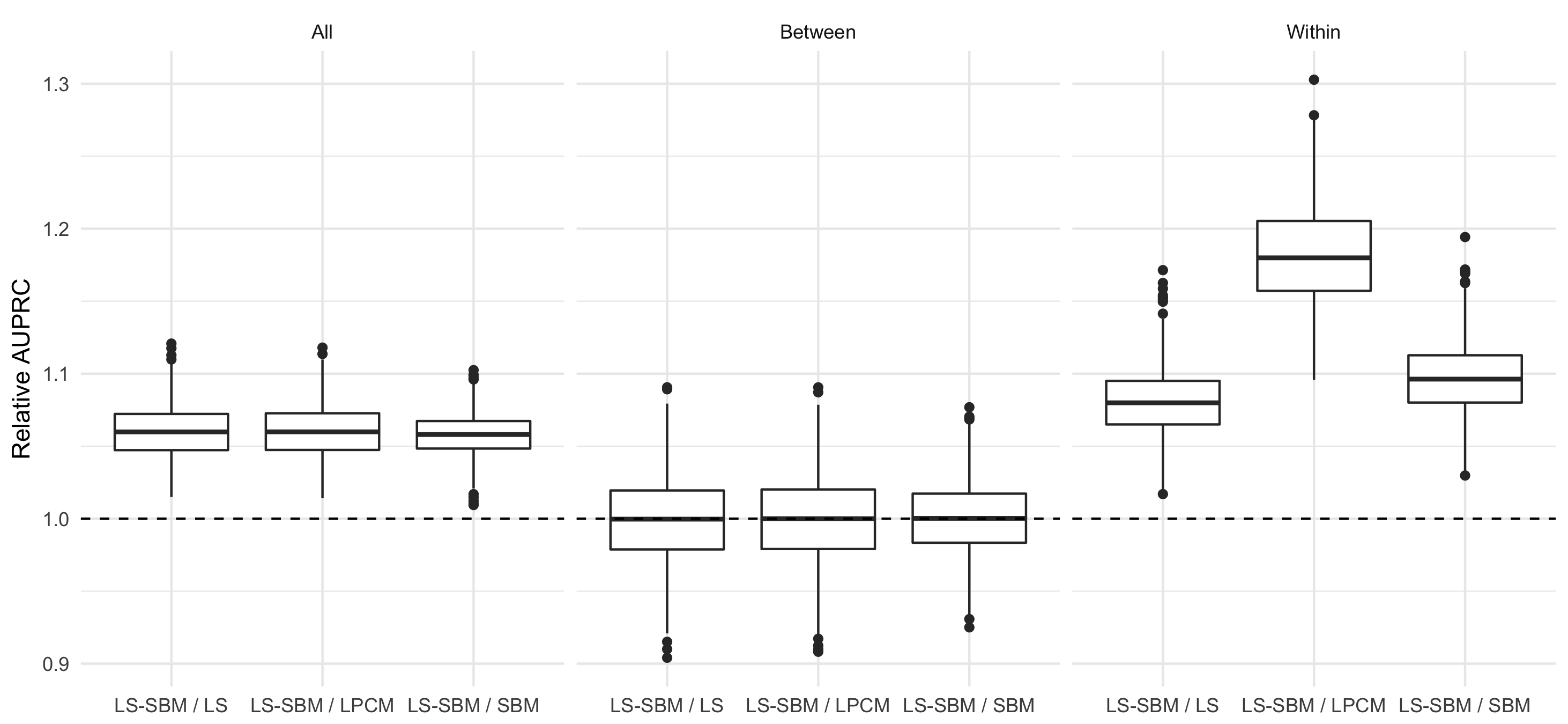}
\caption{Relative area under the precision-recall curve (AUPRC) based on out-of-sample predictions for the LS-SBM and three existing models: latent space (LS) model, latent position cluster model (LPCM) and stochastic blockmodel (SBM).  The leftmost panel shows the relative AUPRC for all held out edges, the middle panel shows the results for edges that are between nodes that are in different blocks and the right panel shows that for edges between nodes within the same block.}
\label{fig:sim}
\end{figure}
\end{landscape}

From the leftmost panel of Figure \ref{fig:sim} it is evident the LS-SBM outperforms all three existing models as the relative AUPRCs are greater than one for all simulations and all network models. Separate precision-recall curves were constructed for the held out portions of the network corresponding to relationships between nodes within the same block (rightmost panel of Figure \ref{fig:sim}) and those portions between nodes that reside in different blocks (middle panel of Figure \ref{fig:sim}). These illustrate that while the LS-SBM appears to predict edges between nodes in different blocks as well as existing models, there are notable improvements in predictions for ties between nodes within the same block.

\section{Projectivity of multiresolution network models}
\label{sec:properties}

Focusing on inference, we seek to understand which features of the hypothetical infinite population we can reasonably expect to learn from a sample of $N$ nodes. 
Projectivity is essential for inference as it facilitates comparison of model parameters across networks of different sizes. This notion of ``different sizes'' naturally arises in multiple network samples, 
which are almost certainly never be of the same size (because of the complexities of sampling networks and prevalence of missing data). In the case of the multiresolution framework, these sizes may also refer to the inferred block sizes. 

\citet{shalizi2013consistency} investigate the projectivity of families of statistical network models, where a model family $\{P_{\theta,N}: N \in \mbb{N}, \theta \in \Theta\}$ is deemed projective if distribution $P_{\theta,N}$ for a sample of $N$ actors can be recovered by marginalizing the distribution $P_{\theta,M}$, for $N < M$, over actors $\{N+1, \dotsc, M\}$.  Stated more formally, a family of network models is projective if 
$P_{\theta,N} = P_{\theta,M} \circ \pi_{M \mapsto N}^{-1}$ for all $N < M < \infty$, where $\pi_{M \mapsto N}$ is the natural projection map that selects the subgraph on the first $N$ nodes from the full graph on $M$ nodes 
and $\circ$ denotes function composition.  Letting $\mbb{Y}_{M \backslash N}$ be $\mbb{Y}_M$ after removing the $\mbb{Y}_N$ subgraph and $\s{Y}_{M\backslash N}$ be its sample space, we can write
\begin{align*}P_{\theta,N}(\mbb{Y}_N) &= P_{\theta,M}(\pi_{M \mapsto N}^{-1}(\mbb{Y}_N)) = \hspace{-.3in}\sum_{{\mbb{Y}_M \in \pi_{M\mapsto N}^{-1}(\mbb{Y}_N)}} \hspace{-.2in}P_{\theta,M}(\mbb{Y}_M) = P_{\theta, M}(\mbb{Y}_N, \mbb{Y}_{M \backslash N} \in \s{Y}_{M\backslash N}), \end{align*}
where $\pi_{M \mapsto N}^{-1}(\mbb{Y}_N)$ is the set of graphs on $\{1, \dotsc M\}$ that have $\mbb{Y}_N$ as the subgraph on the first $N$ actors.

To see why projectivity is crucial for comparisons across networks of different sizes, or equivalently blocks,   
recall the Karnataka dataset.  Suppose we have a model family and consider two village networks: village A network containing $100$ households and  
village B network containing $100,000$ households.  
Upon observing these networks, a researcher wishes to formally compare them by fitting the statistical model to each one and comparing the parameter estimates.  In order for this comparison to be meaningful, the statistical model \textit{must} be projective.  Suppose the parameter associated with the generation of network A is $\theta_A$, the parameter generating network B is $\theta_B$, and $\theta_B = \theta_A$.  The statistical model is projective if, when 90,900 households are marginalized over in the network model fit to network B, the resulting probability model on the remaining 100 households is equal to the model on network A.  (Note in this discussion, we assume the probability model is row and column exchangeable, i.e. node exchangeable as in \citet{von1991finite}.)

Our multiresolution framework proposes 
projective models, $W$ and $B$, for capturing within- and between-community relations, and combines these to form a model for the entire network.  We demonstrate below that a model class defined using combinations of projective distributions forms a projective family of models.  This permits 
researchers to make coherent comparisons 
across communities within the same network, even if communities are of different sizes.

We start by proving that mixtures of projective models are projective.  In the definition of multiresolution network models in \eqref{eq:multires}-\eqref{eq:between}, we assume that $W$ and $B$ are projective. Further, since $W_\alpha$ and $B_\omega$ are not indexed by $N$, $R_\alpha$ and $S_\omega$ must be the same regardless of the number of nodes in the graph.  Below we show that this implies that $W_\alpha$ and $B_\omega$ are projective by showing that, in general, a mixture of projective models is also projective.

\begin{restatable}{thm}{missing_data_proj}\label{thm:missing_proj}
Suppose $\{\widetilde{P}_{\theta,M}: M \in \mbb{N}, \theta \in \Theta \}$ is a projective collection of statistical models over networks and latent variables, such that $\widetilde{P}_{\theta,M}$ is a distribution on $(\mbb{Y}_M,\eta)$ supported over $\s{Y}_M \times \s{N}$ where the dimension of $\s{N}$ does not depend on $M$. Let $P_{\theta,M} = \widetilde{P}_{\theta,M} \circ \tau_M^{-1}$ for $\tau_M : \s{Y}_M \times \s{N} \to \s{Y}_M$ the projection map. Then $\{P_{\theta,M}: M \in \mbb{N}, \theta \in \Theta \}$ is a projective family as well.
\end{restatable}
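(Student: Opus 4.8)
The plan is to treat projectivity purely as a statement about pushforward (image) measures under projection maps, and then reduce the claim to a single commuting-diagram identity, after which the conclusion follows from the functoriality of pushforwards together with the projectivity hypothesis on the joint family. A virtue of working with image measures rather than integrals is that it sidesteps any case distinction between discrete and continuous mixing: ``marginalizing out $\eta$'' is encoded cleanly as applying the coordinate projection $\tau_M$, whatever the nature of $R_\alpha$ or $S_\omega$.

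First I would fix $N < M$ and name the three projections in play. Let $\tau_M : \s{Y}_M \times \s{N} \to \s{Y}_M$ be the coordinate projection that discards the latent variable, so that $P_{\theta,M} = \widetilde{P}_{\theta,M}\circ\tau_M^{-1}$ by definition; let $\pi_{M\mapsto N} : \s{Y}_M \to \s{Y}_N$ be the network restriction to the first $N$ nodes; and let $\widetilde{\pi}_{M\mapsto N} : \s{Y}_M \times \s{N} \to \s{Y}_N \times \s{N}$ be the joint restriction that keeps the first $N$ nodes of the network and leaves $\eta$ untouched. The hypothesis that the dimension of $\s{N}$ is independent of $M$ is exactly what makes $\widetilde{\pi}_{M\mapsto N}$ act as the identity on the $\s{N}$ coordinate, so that projectivity of the joint family reads $\widetilde{P}_{\theta,N} = \widetilde{P}_{\theta,M}\circ\widetilde{\pi}_{M\mapsto N}^{-1}$.

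The key step is the elementary observation that these maps commute,
\[
\pi_{M\mapsto N}\circ\tau_M = \tau_N\circ\widetilde{\pi}_{M\mapsto N},
\]
since both send a pair $(\mbb{Y}_M,\eta)$ to the subnetwork $\mbb{Y}_N$ on the first $N$ nodes. Given this, I would chain the definitions using $(g\circ f)^{-1} = f^{-1}\circ g^{-1}$ for preimages:
\[
P_{\theta,M}\circ\pi_{M\mapsto N}^{-1} = \widetilde{P}_{\theta,M}\circ\tau_M^{-1}\circ\pi_{M\mapsto N}^{-1} = \widetilde{P}_{\theta,M}\circ(\pi_{M\mapsto N}\circ\tau_M)^{-1} = \widetilde{P}_{\theta,M}\circ\widetilde{\pi}_{M\mapsto N}^{-1}\circ\tau_N^{-1}.
\]
Applying projectivity of $\widetilde{P}$ to the first two factors collapses this to $\widetilde{P}_{\theta,N}\circ\tau_N^{-1} = P_{\theta,N}$, which is exactly the projectivity of the marginal family.

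I expect the only real obstacle to be conceptual rather than computational: pinning down the correct meaning of projectivity for the joint network-plus-latent family, and in particular verifying that the shared latent variable $\eta$ is genuinely \emph{preserved} (not resampled or re-dimensioned) when nodes are dropped. This is precisely where the assumption that $\s{N}$ has dimension independent of $M$ does the work; once $\widetilde{\pi}_{M\mapsto N}$ is correctly identified as the identity on $\s{N}$, the commuting identity and the remainder of the argument are routine. A secondary, purely technical point is to confirm measurability of all three projections so the pushforwards are well defined, which holds because coordinate and restriction maps are measurable.
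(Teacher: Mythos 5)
Your proposal is correct and follows essentially the same route as the paper's proof: both reduce the claim to the commuting identity $\tilde\pi_{M\mapsto N}^{-1}\circ\tau_N^{-1}=\tau_M^{-1}\circ\pi_{M\mapsto N}^{-1}$ and then chain it with projectivity of the joint family $\widetilde{P}_{\theta,M}$. The only cosmetic difference is that you verify the identity at the level of the maps themselves (both compositions send $(\mbb{Y}_M,\eta)$ to $\mbb{Y}_N$) while the paper checks it on preimages of points; these are equivalent.
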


\begin{proof}
Let $N < M$.  Further, let $\tilde\pi_{M \mapsto N}$ be the projection map from $\s{Y}_M \times \s{N} \to \s{Y}_N \times \s{N}$ and $\pi_{M\mapsto N}$ be the projection map from $\s{Y}_M$ to $\s{Y}_N$. Let's first suppose that \begin{equation}
    \tilde\pi_{M\mapsto N}^{-1} \circ \tau_N^{-1} = \tau_M^{-1} \circ \pi_{M\mapsto N}^{-1}. \label{eq:mapEQ}
\end{equation} Then, $ P_{\theta,N} = \widetilde{P}_{ \theta,N} \circ \tau_N^{-1} = \widetilde{P}_{\theta,M} \circ \tilde\pi_{M\mapsto N}^{-1} \circ \tau_N^{-1} = \widetilde{P}_{\theta,M} \circ  \tau_M^{-1} \circ \pi_{M\mapsto N}^{-1} = P_{\theta,M} \circ \pi_{M\mapsto N}^{-1},$
where the second equality follows from the projectivity of  $\{\widetilde{P}_{\theta,M}: M \in \mbb{N}, \theta \in \Theta \}$ and  third equality is a consequence of \eqref{eq:mapEQ}. Thus, if \eqref{eq:mapEQ} holds, $\{P_{\theta,M}: M \in \mbb{N}, \theta \in \Theta \}$ is projective by definition.

Verifying \eqref{eq:mapEQ} is straightforward. Let $\mbb{Y}_N \in \s{Y}_N$. Then
\begin{align*}
 \Big(\tilde\pi_{M\mapsto N}^{-1} \circ \tau_N^{-1}\Big)(\mbb{Y}_N) &= \tilde\pi_{M\mapsto N}^{-1}\Big(\!\big\{ (\mbb{Y}_N, \eta) \! : \! \eta \! \in \! \s{N}\big\}\!\Big) = \Big\{ \!(\mbb{Y}_N, \mbb{Y}_{M\backslash N}, \eta) \! :\! \mbb{Y}_{M\backslash N} \! \in \! \s{Y}_{M\backslash N} , \eta \! \in \! \s{N} \! \Big\}.
 \end{align*}
Similarly, \begin{align*}
\Big( \tau_M^{-1} \circ \pi_{M\mapsto N}^{-1}\Big)(\mbb{Y}_N) &= \tau_M^{-1}\Big( \! \big\{ (\mbb{Y}_N, \mbb{Y}_{M\backslash N})\! :\! \mbb{Y}_{M\backslash N}\! \in \!\s{Y}_{M\backslash N}\big\}\!\Big) \\
&= \Big\{\! (\mbb{Y}_N, \mbb{Y}_{M\backslash N}, \eta)\! :\! \mbb{Y}_{M\backslash N}\! \in \!\s{Y}_{M\backslash N} , \eta \!\in\! \s{N}\!\Big\}.
\end{align*}
\end{proof}

By a similar argument, we can show that node-level latent variable models, such as the latent space network model specified for the within-block ties in the LS-SBM, are also projective.  Using Theorem 1, we now show that the class of multiresolution models is projective.
\begin{restatable}{thm}{multires_projectivity}\label{thm:multires_proj}
Multiresolution network models are projective.
\end{restatable}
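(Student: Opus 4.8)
The plan is to reduce the projectivity of the full model \eqref{eq:multires} to the block-level projectivity already in hand, exploiting the fact that \eqref{eq:multires} writes the joint law as a product of independent within- and between-community blocks. First I would record the consequence of Theorem~\ref{thm:missing_proj}: since $W_\alpha$ and $B_\omega$ are the mixtures \eqref{eq:within}--\eqref{eq:between} of the projective kernels $W$ and $B$ against mixing measures $R_\alpha$ and $S_\omega$ that do not depend on $N$, both $W_\alpha$ and $B_\omega$ are themselves projective families of block distributions.

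Fix $N < M$ and the community map $\gamma$. The key structural observation is that, because $\gamma$ is a single fixed partition of $\mbb{N}$, the blocks of $\mbb{Y}_N$ sit inside the blocks of $\mbb{Y}_M$: for $k,l \le K_N$ we have $\mbb{Y}_{N,kk} \subseteq \mbb{Y}_{M,kk}$ and $\mbb{Y}_{N,kl} \subseteq \mbb{Y}_{M,kl}$, while the communities $K_N+1, \dots, K_M$ are composed entirely of the extra actors $\{N+1,\dots,M\}$. Consequently the projection $\pi_{M\mapsto N}$, which deletes those extra actors, acts blockwise: within each surviving block it removes the extra members of the relevant communities, and it deletes whole blocks whenever they involve one of the new communities $k > K_N$.

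Because \eqref{eq:multires} factors the joint law as a product over these blocks, marginalizing $P_{\gamma,\alpha,\omega,M}$ over the extra actors distributes across the product, and I would handle each factor separately. For a within-block $\mbb{Y}_{M,kk}$ with $k \le K_N$, projectivity of $W_\alpha$ gives that summing over the extra members of community $k$ returns $W_\alpha(\mbb{Y}_{N,kk})$; the analogous statement for $B_\omega$ handles each surviving between-block $\mbb{Y}_{M,kl}$ with $k,l \le K_N$. For any block involving a new community $k > K_N$, every actor defining that block is marginalized away, so the corresponding factor sums to $1$ (the empty-subgraph case of block projectivity, i.e.\ normalization). Collecting the surviving factors reproduces exactly the product \eqref{eq:multires} defining $P_{\gamma,\alpha,\omega,N}$, which is the desired identity $P_{\gamma,\alpha,\omega,N} = P_{\gamma,\alpha,\omega,M}\circ \pi_{M\mapsto N}^{-1}$.

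The step I expect to require the most care is the claim that marginalization distributes cleanly over the block product. This rests on two points that must be verified rather than assumed: that the preimage $\pi_{M\mapsto N}^{-1}$ respects the block decomposition (guaranteed by $\gamma$ being a fixed map, so that the community membership of each retained actor is unchanged), and that the between-block projectivity of $B_\omega$ holds when extra actors are removed from \emph{either} of the two communities defining the block, not just one. Once these bookkeeping points are settled, the argument is a direct blockwise application of the already-established projectivity of $W_\alpha$ and $B_\omega$, with the new communities contributing only trivial normalizing factors.
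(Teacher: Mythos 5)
Your proposal is correct and follows essentially the same route as the paper's proof: invoke Theorem~\ref{thm:missing_proj} to get projectivity of the mixtures $W_\alpha$ and $B_\omega$, factor the marginalization over the block product, discard blocks involving only new communities as trivial factors equal to one, and apply block-level projectivity to the surviving blocks. The extra bookkeeping you flag (the projection acting blockwise and two-sided removal of actors in between-blocks) is implicit in the paper's argument but does not change the approach.
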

\begin{proof}
Since $W$ and $B$ are projective models, the models $W_{\alpha}(\mbb{Y}_{N, kk})$ and $B_{\omega}(\mbb{Y}_{N, kl})$ are then projective  by Theorem 1 because their distributions do not depend on $N$ and they are mixtures over projective models.  Consider 
\begin{align}
 P_{\gamma, \alpha, \omega,M}(\mbb{Y}_N,& \mbb{Y}_{M \backslash N} \in \s{Y}_{M\backslash N}) = \prod_{k=1}^{K_M} W_{\alpha}(\mbb{Y}_{N,kk}, \mbb{Y}_{M \backslash N, kk} \in \s{Y}_{M\backslash N, kk}) \notag \\
 &\qquad \prod_{k=1}^{K_M - 1} \prod_{l = k+1}^{K_M} B_{\omega}(\mbb{Y}_{N,kl}, \mbb{Y}_{M \backslash N, kl} \in \s{Y}_{M\backslash N, kl}). \label{eq:projMR}
 \end{align}
For any $k,l$ such that none of the nodes in blocks $k,l$ are in $\mbb{Y}_N$, we have $W_{\alpha}(\mbb{Y}_{N,kk}, \mbb{Y}_{M \backslash N, kk} \in \s{Y}_{M\backslash N, kk})  = W_{\alpha}(\mbb{Y}_{M \backslash N, kk} \in \s{Y}_{M\backslash N, kk}) = 1$ and similarly for $B_{\omega}$. Hence we only need consider blocks with at least one node from $\mbb{Y}_N$, and the right hand side of \eqref{eq:projMR} is equal to 
\[ \prod_{k=1}^{K_N} W_{\alpha}(\mbb{Y}_{N,kk}, \mbb{Y}_{M \backslash N, kk} \in \s{Y}_{M\backslash N, kk})  \prod_{k=1}^{K_N - 1} \prod_{l = k+1}^{K_N} B_{\omega}(\mbb{Y}_{N,kl}, \mbb{Y}_{M \backslash N, kl} \in \s{Y}_{M\backslash N, kl}). \]
Since $W_{\alpha}$ and $B_{\omega}$ are 
projective,  $ W_{\alpha}(\mbb{Y}_{N,kk}, \mbb{Y}_{M \backslash N, kk} \in \s{Y}_{M\backslash N, kk}) = W_{\alpha}(\mbb{Y}_{N,kk}),$
 and similarly for $B_{\omega}$. Thus the right hand side of \eqref{eq:projMR} is equal to
\[ \prod_{k=1}^{K_N} W_{\alpha}(\mbb{Y}_{N,kk})  \prod_{k=1}^{K_N - 1} \prod_{l = k+1}^{K_N} B_{\omega}(\mbb{Y}_{N,kl}),\]
which equals $P_{\gamma, \alpha, \omega, N}(\mbb{Y}_N)$.
\end{proof}

We emphasize that projectivity of the multiresolution model only occurs when $B_\omega$ and $W_\alpha$ are both projective, and hence these distributions do not depend on $N$.  If the between-block tie probabilities depend on the size of the observed graph, then the model is not projective overall but is projective within each block.  By ``projective within each block" we mean that the within-block parameters are still comparable across blocks and networks of different sizes, although it is meaningless to, in general, compare the between-block parameters. An advantage of the between-block parameters depending on $N$ is that the model can be sparse in the limit. That is, as the number of nodes grows from $N$ to infinity, the  expected density (i.e. proportion of edges present in the graph) goes to zero.  Equivalently, a graph is asymptotically sparse if the average expected degree grows sub-linearly with $N$.  The Aldous-Hoover Theorem implies that infinitely exchangeable sequences of nodes correspond to dense graphs~\citep{aldous1981representations,orbanz2015bayesian}.  In our case, however, for a fixed $N$ and $\gamma$, the multiresolution model is only exchangeable modulo $\gamma$, meaning that nodes within the same block are exchangeable (similar to that in regression; see  \cite{mccullagh2005exchangeability}).  If we assume each node has an equal probability of belonging to any block (e.g. placing a uniform distribution on each $\gamma_i$), the model is finitely exchangeable \citep{von1991finite}.  However, we are unaware of a prior distribution on $\gamma$ that gives both finite exchangeability and projectivity, but does not assume knowledge about a bound on the size of blocks. In the web-based supplementary material we show that families of projective models which are finitely exchangeable are asymptotically dense.  Nevertheless, for practical purposes, our model can be arbitrarily sparse for any network with a finite number of nodes and, since it is projective, permit comparison between networks of different sizes.

Our results on exchangeability and projectivity relate to recent work on nonparametric generative models for networks (e.g.~\citet{caron2014sparse},~\citet{veitch2015class},~\citet{crane2015framework}, ~\citet{crane2016edge}, or ~\citet{broderick2016edge}).  
The  objectives in our framework are subtly but critically different, however.  In recent developments in nonparametrics, the objective is to understand the structure of the graph that is implied by a probability model as the network grows from the observed size $N$ to its limit.  These recent works define new notions of exchangeability that imply critically different graph properties than  node exchangeability discussed here.  Our work, in contrast, focuses on the inverse inference paradigm: given a sample from an infinite population, our goal is to understand which properties of the population could be feasibly estimated using the multiresolution model.  This perspective leads to a focus on projectivity.

\section{Discussion}
\label{sec:discuss}
In this paper we present a multiresolution model for social network data.  Our model is well-suited for graphs that are overall very sparse, but contain pockets of local density.  Our model utilizes mixtures of projective models to separately characterize tie structure within and between dense pockets in the graph.  Our proposed framework is substantially more flexible than existing latent variable approaches (such as the LPCM) and supports meaningful comparisons of parameter estimates across communities and networks of varying sizes.

We introduced the LS-SBM as one example of a model within the multiresolution class.  However, alternative multiresolution models could be defined by replacing the latent space model representing within-community relations with LPCMs or LS-SBMs. This would add complexity to the model, allowing for sub-community structure within the global-community structure.  A key distinction between the general multiresolution framework propose here and previous approaches of, for example, Peixoto and Lyzinski et al. is that our approach allows and suggests using different network models to capture structure at different levels.  This allows models to be constructed that leverage the advantages (e.g. parsimony, detailed structure) of multiple models simultaneously.

Our work also contributes to an active discussion in the statistics literature emphasizing the importance of understanding the relationship between sampling and modeling social networks.~\citet{crane2015framework}, for example, present a general framework for sampling and inference for network models.  Our work emphasizes the importance of ``consistency under sampling'' for comparison across networks and across communities within the same graph.  

Our projectivity result for multiresolution models means that our framework can be used to compare across communities within a graph, even if the communities are different sizes.  \citet{schweinberger2014local} define the concept of domain consistency, which can be thought of as projectivity over communities, and propose a class of models that satisfy this property. If, for example, a member of a community is missing, then the interpretation of the parameters describing behavior in that community will be fundamentally different than if the member were present.  Our model also has this property, but is more general.  In particular, the notion of projectivity over communities requires that data be collected by using cluster sampling over communities. 
Here we strengthened and generalized this framework by proving that our model is  projective at the actor level, i.e.\ even if complete communities are not observed.  Using our framework, it is possible to sample at the actor rather than the community level while estimating parameters that are comparable across communities.

\paragraph{Acknowledgements}
This work was partially supported by the National Science Foundation under Grant Number SES-1461495 to Fosdick and Grant Number SES-1559778 to McCormick.  McCormick is also supported by grant number K01 HD078452 from the National Institute of Child Health and Human Development (NICHD).  This material is based upon work supported by, or in part by, the U. S. Army Research
Laboratory and the U. S. Army Research Office under contract/grant number W911NF-12-1-0379.  Murphy and Ng are supported by the Science Foundation Ireland funded Insight Research Centre (SFI/12/RC/2289).  The authors would also like to thank the Isaac Newton Institute Program on Theoretical Foundations for Statistical Network Analysis workshop on Bayesian Models for Networks, supported by EPSRC grant number EP/K032208/1.

\bibliographystyle{abbrevnamed.bst}

\bibliography{multiresbib}

\clearpage
\centerline{\Huge \bf Web-based supplementary materials}

\appendix

\doublespace

\section{Interplay between finite exchangeability, projectivity, and asymptotic sparsity}
The relationship between (finite) exchangeability and projectivity is an important part of our modeling framework.  Since we restrict the size of each community to be finite, our model is finitely exchangeable (and not infinitely exchangeable).  We now show that, even with finite exchangeabiltiy, a model that is projective is also dense in the limit. 

\begin{restatable}{thm}{proj_density}\label{thm:proj_density}
Projective families of finitely node-exchangeable models for networks are asymptotically dense.
\end{restatable}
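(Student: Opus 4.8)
The plan is to show that projectivity together with finite node-exchangeability forces the marginal probability of any single dyad to be a constant independent of $N$, which then pins the expected edge density to that same constant for every $N$. First I would fix $\theta$ and, for each $N \ge 2$, set $p_N := P_{\theta,N}(Y_{12}=1)$, the probability that a representative dyad is an edge. Finite node-exchangeability at level $N$ makes this independent of the chosen dyad: for any $1 \le i < j \le N$ there is a permutation of the $N$ labels carrying $(i,j)$ to $(1,2)$, and invariance of $P_{\theta,N}$ under relabeling gives $P_{\theta,N}(Y_{ij}=1) = p_N$ for every dyad.

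Next I would invoke projectivity to remove the dependence of $p_N$ on $N$. The event $\{Y_{12}=1\}$ is measurable with respect to the subgraph induced by the first two nodes, so its preimage under the projection $\pi_{M\mapsto N}$ (for any $M > N \ge 2$) is again the event $\{Y_{12}=1\}$ on the $M$-node sample space. Feeding this event into the projectivity identity $P_{\theta,N} = P_{\theta,M}\circ\pi_{M\mapsto N}^{-1}$ gives $p_N = P_{\theta,N}(Y_{12}=1) = P_{\theta,M}(Y_{12}=1) = p_M$, the final equality using exchangeability at level $M$. Hence $p_N$ equals a single constant $p := p_2$ for all $N$.

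Finally I would convert this into the density statement. Writing the density as the average of the dyad indicators,
\[
E_{P_{\theta,N}}\!\left[\binom{N}{2}^{-1}\sum_{1 \le i < j \le N} Y_{ij}\right] = \binom{N}{2}^{-1}\sum_{1 \le i < j \le N} p_N = p,
\]
the expected density is constant in $N$. Consequently it cannot decay to zero as $N \to \infty$; equivalently the expected number of edges is $p\binom{N}{2} = \Theta(N^2)$ and the expected degree of a node is $(N-1)p = \Theta(N)$, which is exactly the dense regime and the complement of the sparse regime in which expected density vanishes and average degree grows sub-linearly. The sole exception is the degenerate model $p = 0$ (the almost-surely empty graph), which I would note and set aside as non-informative.

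I expect the work here to be conceptual bookkeeping rather than a hard estimate: the main point to get right is that finite exchangeability must be applied level-by-level to equalize dyad marginals within each $N$, while projectivity is the separate ingredient that ties the levels together, and that the single-edge event genuinely descends to a two-node event so that its preimage under $\pi_{M\mapsto N}$ is unchanged. Notably, unlike the Aldous-Hoover route referenced in the main text, this argument needs only finite --- not infinite --- exchangeability and avoids the representation theorem entirely.
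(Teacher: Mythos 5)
Your proposal is correct and follows essentially the same route as the paper's proof: exchangeability equalizes the dyad marginals within each $N$, projectivity identifies them with the two-node marginal $P_{2,\theta}(Y_{12}=1)$, and the expected density is therefore constant in $N$. The only addition is your explicit note about the degenerate $p=0$ case, which the paper leaves implicit in stating the asymptotic density as $P_{2,\theta}(Y_{12})$.
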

\begin{proof}
First we write
\[ E_{P_{N,\theta}}\left[ \sum_{\mbb{Y}_N} Y_{ij} \right] = \sum_{\mbb{Y}_N} P_{N,\theta}( Y_{ij} = 1). \]
Now we can write the marginal probability as
\[ P_{N, \theta}(Y_{ij} = 1) = P_{N, \theta}(Y_{12}) = P_{N, \theta}(Y_{12} = 1, \mbb{Y}_{N \backslash 12} \in \s{Y}_{N \backslash 12}) =  P_{2,\theta}( Y_{12} )  \]
where the first equality was by exchangeability of the distribution $P_{N, \theta}$ and the second two by projectivity. Hence the probabilities in the sum are all the same and the expected number of edges in the newtork is 
\[ E_{P_{N,\theta}}\left[ \sum_{\mbb{Y}_N} Y_{ij} \right] = \tfrac{1}{2}N(N-1) P_{2,\theta}( Y_{12} ).\]
Hence the family is asymptotically dense with asymptotic density $P_{2,\theta}( Y_{12} )$.
\end{proof}

\section{Prior specification in the LS-SBM}
The assortativity restriction discussed in Section \ref{sec:priors} and shown in \eqref{eq:assort} required the mean (logit) probability of a tie within blocks be greater than or equal to the mean (logit) probability of a tie between blocks.  To translate this restriction into constraints on the parameters, first consider the within-block tie probabilities. It can be shown $$E\Big[\|\mb{Z}_i - \mb{Z}_j\|\Big] = 2 \sigma_k \frac{\Gamma(\frac{D+1}{2})}{\Gamma(\frac{D}{2})}.$$ Then, if $\mb{Z}_1,...,\mb{Z}_N$ are IID N$_D$($0,\sigma_k^2 I_D$), where $D$ is the dimension of the latent space, the conditional expected logit is equal to $$E\Big[\mathrm{logit}(Pr(Y_{ij}=1))|\gamma_i = \gamma_j = k,\beta_k,\sigma_k\Big] = \beta_k -  2 \sigma_k \frac{\Gamma(\frac{D+1}{2})}{\Gamma(\frac{D}{2})}.$$  Taking expectations then over the distribution of $\eta_k = (\beta_k,\sigma_k)$, we find
\begin{equation}
E\Big[\mathrm{logit}(Pr(Y_{ij}=1))|\gamma_i = \gamma_j\Big] = \mu_1 -  2 e^{\mu_2} \frac{\Gamma(\frac{D+1}{2})}{\Gamma(\frac{D}{2})}.\label{eq:assort_within}
\end{equation}
Recall the between block tie probabilities $Pr(Y_{ij} = 1|\gamma_i \not= \gamma_j, \tau_{\gamma_i\gamma_j}) = \tau_{\gamma_i\gamma_j}.$  Then for $\gamma_i \not=\gamma_j$, $\mathrm{logit}(Pr(Y_{ij}=1)) = \log\Big(\frac{\tau_{\gamma_i\gamma_j}}{1-\tau_{\gamma_i\gamma_j}}\Big)$ and marginalizing over the Beta distribution, we find
\begin{equation}
E\Big[\mathrm{logit}(Pr(Y_{ij}=1))|\gamma_i \not= \gamma_j\Big] = \psi(a_0) - \psi(b_0), \label{eq:assort_between}
\end{equation}
where $\psi(x)$ is the digamma function: $\psi(x) = \frac{d \text{log}(\Gamma(x))}{dx}$. 

Combining equations \eqref{eq:assort_within} and \eqref{eq:assort_between}, the restriction on the population parameter space is
$$\mu_1 -  2 e^{\mu_2} \frac{\Gamma(\frac{D+1}{2})}{\Gamma(\frac{D}{2})} \ge  \psi(a_0) - \psi(b_0).$$  
This constraint is incorporated in the prior specification in \eqref{eq:prior} in the manuscript.

\section{Steps for choosing number of blocks}

We now detail our method of pre-selecting the number of blocks $K$ for a network using cross-validated spectral clustering. Let $K_{\mathrm{max}}$ be the maximal number of clusters the researcher wants to consider.

Let $\mathcal{A} = \{(i, j) : 1 \leq i < j \leq N\}$ be the set of possible edges in an undirected binary network with no self-loops.

\begin{enumerate}
    \item Create a square matrix $P_k$ of the same dimensions as the observed network $Y$ for each $k = 1, \dotsc, K_{\mathrm{max}}$. $P$ will contain the held-out predicted probabilities for  probabilties.
    \item Randomly partition $\mathcal{A}$ in to $M_{CV}$ folds of equal size $\mathcal{A}_1, \dotsc, \mathcal{A}_{M_{CV}}$ (or as close to equal size as possible).
    \item For $m=1, \dotsc, M_{CV}$:
    \begin{enumerate}
        \item Create a network $\tilde{Y}^{(m)}$ which is a copy of the observed network $Y$ except that $\tilde{Y}_{i,j}^{(m)}$ is missing for all $(i,j) \in \mathcal{A}_{m}$.
        \item Impute the elements of $\mathcal{A}_{m}$ in $Y^{(m)}$ using the observed degrees of each node (i.e.\ $\tilde{Y}_{i,j}^{(m)} = d_i d_j$ where $d_i$ is the fraction of observed edges among non-missing possibilities for node $i$).
        \item For each $k=1, \dotsc, K_{\mathrm{max}}$:
        \begin{enumerate}
            \item Estimate a degree-corrected stochastic block model  on $\tilde{Y}^{(m)}$ with $k$ blocks using, e.g.\  assortative spectral clustering.
            \item Re-impute the elements of $\mathcal{A}_{m}$ in $Y^{(m)}$ using the predicted probabilities from the estimated degree corrected stochastic block model.
            \item Repeat until the sum of the squared differences in predicted probabilities from one iteration to the next falls below a desired threshold.
            \item Set the elements of $\mathcal{A}_m$ in $P_k$ to be the predicted probabilities from the final stochastic block model fit.
        \end{enumerate}     
    \end{enumerate}
    \item Calculate the AUC, MSE, and MPI between $P_k$ and $Y$ for each $k$.
\end{enumerate}
We repeated the above procedure twenty times (with different random folds each time) to obtain ten out-of-sample AUC, MSE, and MPI estimates for each possible $K$. We then computed the mean AUC, MSE, and MPI and a 95\% confidence interval for this mean for each $K$. Each measure had a value of $K$ that minimized the measure -- for each measure we defined the selected number of blocks based on that measure as the smallest $K$ whose 95\% confidence interval contained the mean of the optimal $K$.

\section{LPCM on Karnataka village dataset}

As a comparison, we also present the fit from the LPCM in Figure~\ref{fig:lpcm}.  We fit the LPCM using the variational approximation of~\citet{salter2013variational} provided in the $\mathtt{R}$ package `VBLPCM'~\citep{salter2015package}.  We used six clusters in the LPCM for comparison with our six blocks.  A first key distinction is that the LPCM has a smaller global intercept term than $\mu_1$ and is forced to capture heterogeneity in tie propensity by expanding the distance between clusters (and thereby individuals) in the latent space.  This approach is in contrast to the LS-SBM which maintains block specific intercepts and latent spaces, facilitating comparison across blocks.  Consequences of encoding all clusters in the same latent space is that tie propensities are extremely small between groups on opposite sides of the latent space and relations between groups that are adjacent are constrained by the triangle inequality.  The group represented in green, for example, is adjacent to the group in pink, but by virtue of the distance between the pink and teal groups, must also be close to the group in teal.

\begin{figure}[ht!]
\centering
\includegraphics[width=3.5in]{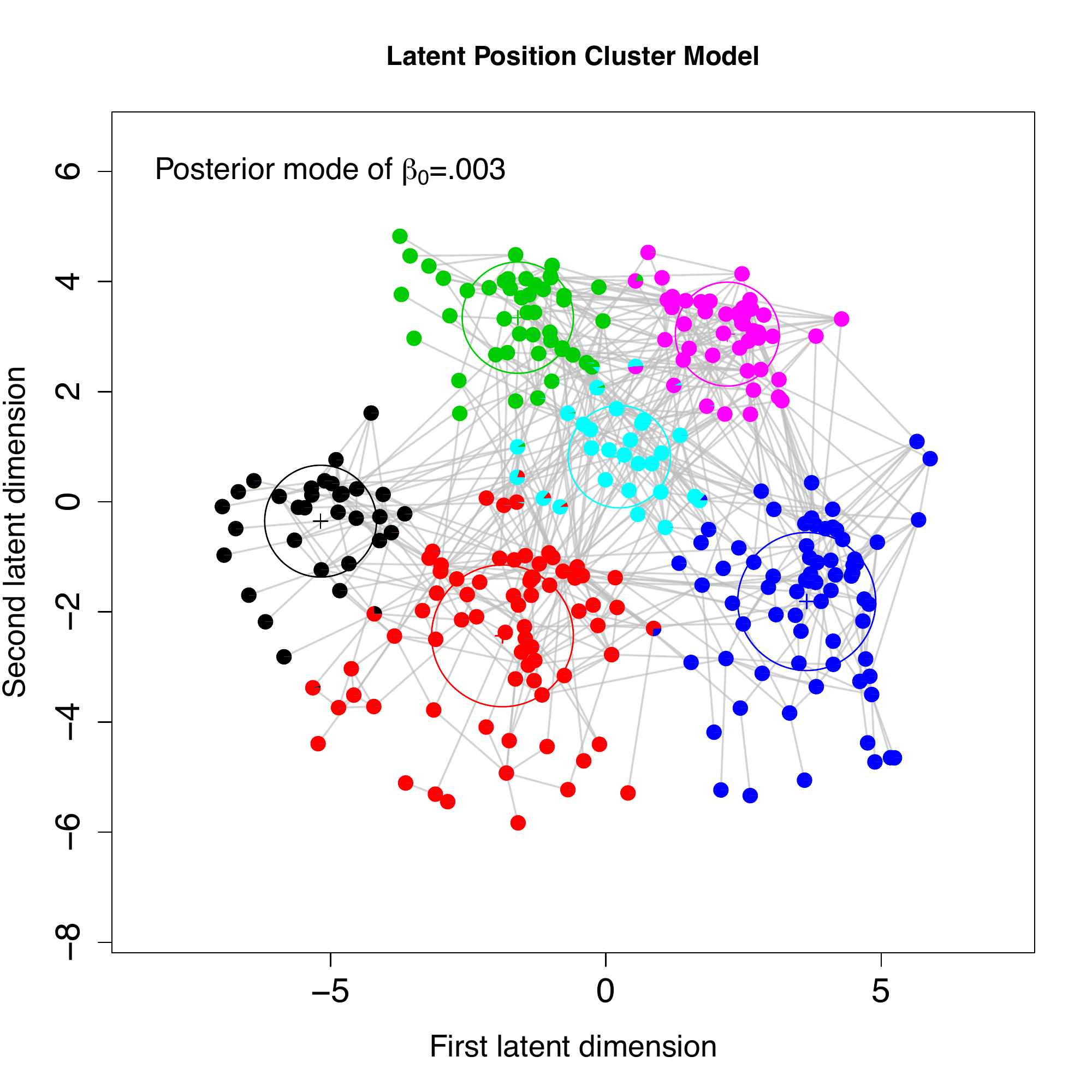}
\caption{LPCM latent positions for the household-level ``visit" relation data for village number 59 from the Karnataka village dataset.}
\label{fig:lpcm}
\end{figure}

\section{Sampling algorithm}
In this section, we give the details of the sampling algorithm. For priors we set
\begin{align*}
\gamma_i &\sim \text{Categorical}(\boldsymbol{\pi}) \hspace{.3in} i=1,...,N \\
\boldsymbol{\pi} &\sim \text{Dirichlet}(\upsilon_0, \dotsc, \upsilon_0)\\
\bs{\tau}_{kl} &\sim \text{Beta}(a_{0}, b_{0}) \hspace{.3in} 1 \le k < l \le N\\
(\boldsymbol{\mu} | \boldsymbol{\Sigma}) &\sim \text{MVN}(\mathbf{m}_0, s_0^{-1} \boldsymbol{\Sigma}) \\
\boldsymbol{\Sigma} &\sim \text{InvWishart}(\boldsymbol{\Psi}_0, \nu_0).
\end{align*}
Denote the full set of parameters $\boldsymbol{\zeta} =\{ \boldsymbol{\gamma, \bs{\tau},\mb{Z},\beta, \pi, \sigma,\mu, \Sigma}\}$. Also define $\alpha_k = (\beta_k, \log\sigma_k)$ as the within-block latent space parameters. The posterior factors as
\begin{align*}
P(\boldsymbol{\zeta}|Y)\propto P(Y|\boldsymbol{\gamma,\bs{\tau},\mb{Z},\beta})& P(\boldsymbol{\gamma} | \boldsymbol{\pi})P(\mb{Z} | \boldsymbol{\sigma}) P(\boldsymbol{\beta}, \boldsymbol{\sigma}, \bs{\tau} | \boldsymbol{\mu}, \boldsymbol{\Sigma}, a_0, b_0) \\
  &\qquad \times P(\boldsymbol{\pi} | \upsilon_0)P(\boldsymbol{\mu} | \boldsymbol{\Sigma}, \mathbf{m}_0, s_0) P( \boldsymbol{\Sigma} | \boldsymbol{\Psi}_0, \nu_0),
  \end{align*}

The full posterior is not available in closed form.  We thus take draws from the posterior using the Markov chain Monte Carlo algorithm below (all parameters besides the one being updated are understood be set to their latest values).  

Let $n_k = \sum_i \mathbf{1}_{\gamma_i= k}$ denote the number of nodes in block $k$, $\bar{\bs{\alpha}}^{(t)} = \frac{1}{K} \sum_k \bs{\alpha}_k^{(t)}$ be the sample mean of the block-level $\alpha$ parameters, and $\mb{S}_{\alpha}^{(t)} = \sum_k \big( \bs{\alpha}_{k}^{(t)} - \bar{\bs{\alpha}}^{(t)}\big)\big( \bs{\alpha}_{k}^{(t)} - \bar{\bs{\alpha}}^{(t)}\big)^T$.  Given an admissible set of initialization values,  iteration $t+1$ of the sampling algorithm proceeds as follows:

\begin{enumerate}
\item For $i = 1, 2, \dotsc, N$: 
\begin{enumerate}
\item Propose $\gamma_{i}^* \sim \text{Categorical}(\lambda_{i1}^{(t)}/ \Sigma_k \lambda_{ik}^{(t)},....,\lambda_{iK}^{(t)}/ \Sigma_k \lambda_{ik}^{(t)})$ where
\begin{equation*} \lambda_{ik} =  \frac{\epsilon + \sum_{j \in \s{S}_k} Y_{ij}}{|\s{S}_k | + 1} \end{equation*}
In words, the probability that the proposal for node $i$ is block $k$ is proportional to the number of ties $i$ has to the block plus $\epsilon$ divided by the size of the block plus one. The presence of $\epsilon$ avoids probabilities equal to 0 or 1 and encourages jumping to blocks with few nodes. 

\item Conditional on this configuration of group memberships, propose $\mb{Z}_i^* | \gamma_i^* \sim \text{MVN}_D(\mb{m}_i^*, r_Z^2 I_D)$, where
\begin{equation*} \mb{m}_i^* = \begin{cases} \mb{Z}_i^{(t)}, & \gamma_i^* = \gamma_i^{(t)} \\
									     \tfrac{1}{|\s{G}_{i, \gamma_i^*}^{(t)} |}\sum_{j \in \s{G}_{i, \gamma_i^*}^{(t)}} \mb{Z}^{(t)}_j, & \gamma_i^* \neq \gamma_i^{(t)} \text{ and } |\s{G}_{i, \gamma_i^*}^{(t)}| > 0 \\
									     0, & \text{ otherwise}.
					   	  \end{cases} \end{equation*}
Here $\s{G}_{i, \gamma_i^*}^{(t)} = \{j : \gamma_j^{(t)} = \gamma_i^*, Y_{ij} = 1\}$ is the set of nodes in the same proposed block as $i$ to which $i$ is connected. In words, if $i$ stays in the same block, center at its last position. If it moves to a new block and has ties in that block, center at the mean position of its ties in that block. If it moves to a new block and does not have any ties in that block, center at the origin.  The variance of the proposed position coordinates equals $r_Z^2$.

\item Set $(\gamma_i^{(t+1)}, \mb{Z}_i^{(t+1)}) = (\gamma_i^*, \mb{Z}_i^*)$ with probability
\[\frac{p(\gamma_i^*,  \mb{Z}_i^* | \n{others})q(\gamma_i^{(t)}, \mb{Z}_i^{(t)} | \gamma_i^*, \mb{Z}_i^{*})   }{p( \gamma_i^{(t)}, \mb{Z}_i^{(t)} | \n{others})q(\gamma_i^*, \mb{Z}_i^{*} |\gamma_i^{(t)}, \mb{Z}_i^{(t)} ) } \]
where $q((1) | (2))$ is shorthand for the transition density to (1) from (2).  For the first ratio we have
\begin{align*}
    \frac{p( \gamma_i^*,  \mb{Z}_i^* | \n{others})}{p( \gamma_i^{(t)},  \mb{Z}_i^{(t)} | \n{others})} = &\frac{p(Y | \bs{\gamma}^*,  \mb{Z}^*, \bs{\tau}^{(t)}, \bs{\beta}^{(t)}) p(\gamma_i^* | \bs{\pi}^{(t)}) p(\mb{Z}_i^* | \bs{\sigma}^{(t)})}{p(Y | \bs{\gamma}^{(t)},  \mb{Z}^{(t)}, \bs{\tau}^{(t)}, \bs{\beta}^{(t)})p(\gamma_i^{(t)} | \bs{\pi}^{(t)}) p(\mb{Z}_i^{(t)} | \bs{\sigma}^{(t)})}.
\end{align*} 
For the second ratio, we have
\begin{align*} q({\gamma}_i^*, \mb{Z}_i^{*} |{\gamma}_i^{(t)}, \mb{Z}_i^{(t)} ) &=q({\gamma}_i^* |{\gamma}_i^{(t)} ) q( \mb{Z}_i^{*} |{\gamma}_i^*, {\gamma}_i^{(t)}, \mb{Z}_i^{(t)} ) \\
&= \left[ \prod_j (\lambda_{ij}^{(t)}/ \Sigma_k \lambda_{ik}^{(t)})^{\mathbf{1}\{\gamma_{i}^*=j\}} \right] \phi_D\left(\mb{Z}_i^*; \mb{m}_i^*(\gamma_i^*, \bs{\gamma}^{(t)}, \mb{Z}^{(t)}), r_Z^2 I_D\right) 
\end{align*}
and analogously for the numerator. 
\end{enumerate}

\item For each $i=1, \dotsc, N$ propose $\mb{Z}_i^* \sim \text{MVN}_D(\mb{Z}_i^{(t)}, r_Z^2 I_D)$, and set $\mb{Z}_i^{(t+1)} = \mb{Z}_i^*$ with probability
\[ \frac{\left[\prod_{j \in \s{S}_k}p(Y_{ij} | \mb{Z}_i^*,\mb{Z}_j, \bs{\beta}_k)\right] \phi_D(\mb{Z}_i^*; 0, \sigma_k^{2} I_D)}{\left[\prod_{j \in \s{S}_k}p(Y_{ij} | \mb{Z}_i^{(t)},\mb{Z}_j, \bs{\beta}_k)\right]\phi_D(\mb{Z}_i^{(t)}; 0,\sigma_k^{2} I_D)}\]
where here $k = \gamma_i$. Otherwise $\mb{Z}_i^{(t+1)} = \mb{Z}_i^{(t)}$. This step updates the latent positions of each node, without updating the block memberships.  

\item Sample $\pi^{(t+1)} \sim \text{Dirichlet}(\upsilon_0 + n_1^{(t)}, \dotsc, \upsilon_0 + n_K^{(t)})$.
\item For $k = 2, \dotsc, K$ and $l = 1, \dotsc, k-1$,  sample $\bs{\tau}_{kl}^{*} \sim \text{Beta}(a_{0kl} +s_{kl}, b_{0kl} + n_k n_l - s_{kl}),$ where  $s_{kl} = \sum_{i,j} Y_{ij}\mb{1}_{\gamma_i = k, \gamma_j = l}$ is the number of edges between blocks $k$ and $l$.  Set $\bs{\tau}_{kl}^{(t+1)} = \bs{\tau}_{kl}^{*}$.
\item For $k=1, \dotsc, K$, let $\bs{\alpha}_k = (\beta_k, \log(\sigma_k))$ and sample $\bs{\alpha}_k^* \sim \n{MVN}_2(\bs{\alpha}_k^{(t)}, A_{\alpha})$. Set $\bs{\alpha}_k^{(t+1)} =\bs{\alpha}_k^*$ with probability
\begin{align*}
&\frac{ \left[\prod_{i, j \in \s{S}_k}P(Y_{ij} | \mb{Z}_i,\mb{Z}_j, \beta_k^*) \right]\prod_{i \in \s{S}_k} \left[ \phi_D(\mb{Z}_i; 0,\sigma_k^{2*} I_D)\right] }{ \left[\prod_{i, j \in \s{S}_k}P(Y_{ij} | \mb{Z}_i,\mb{Z}_j, \beta_k^{(t)}) \right] \prod_{i \in \s{S}_k} \left[ \phi_D(\mb{Z}_i; 0,\sigma_k^{2(t)} I_D)\right] }\\
&\qquad\frac{\phi_D(\bs{\alpha}_k^*; \bs{\mu},\bs{\Sigma})}{\phi_D(\bs{\alpha}_k^{(t)}; \bs{\mu} ,\bs{\Sigma})}.
\end{align*}
\item Sample each component of $\boldsymbol{\mu} = (\mu_1, \mu_2)$ one at a time from their respective full conditional distribution given all other parameters and subject to the assortativity restriction.  More details given below.
\item Sample
\[\hspace{-.5in}\bs{\Sigma}^{(t+1)} \sim \text{InvWishart}\left(\bs{\Psi}_0 + \mb{S}_{\alpha}^{(t)} + \tfrac{Ks_0}{K + s_0}\big(\bar{\bs{\alpha}}^{(t)} - \mathbf{m}_0\big)\big(\bar{\bs{\alpha}}^{(t)} - \mathbf{m}_0\big)^T,  K + \nu_0\right).\]

\end{enumerate}
The above algorithm is repeated numerous times until a suitable sample from the posterior distribution is obtained.

We now describe the process of sampling from the vector of global within-community means, $\bs{\mu}$.  First note that without restricting assortativity, the (joint) full conditional is:
\[\bs{\mu}^{(t+1)} \sim \text{MVN}\left(\widetilde{\bs{m}} = \frac{ K\bar{\bs{\alpha}}^{(t)}+ s_0 \mathbf{m}_0}{K + s_0}, \widetilde{\bs{\Sigma}} =  (K + s_0)^{-1} \bs{\Sigma}^{(t)}\right).\]  Denote the components of $\widetilde{\bs{m}} = (\widetilde{m}_1, \widetilde{m}_2)$ and the components of $\widetilde{\bs{\Sigma}}$ as $\widetilde{\sigma}_1^2$, $\widetilde{\sigma}_2^2$, and $\widetilde{\rho}$.  Recall that our restriction for assortativity is:
$$E\Big[\mathrm{logit}(Pr(Y_{ij}=1))|\gamma_i = \gamma_j\Big] \ge E\Big[\mathrm{logit}(Pr(Y_{ij}=1))|\gamma_i \not= \gamma_j\Big]. $$
Given this restriction, the full conditional for $\bs{\mu}$ becomes truncated multivariate normal.  While efficient algorithms exist for sampling from truncated multivariate normals subject to linear constraints (e.g.~\citet{rodriguez2004efficient} and others), the logit function means that our constraints on the parameters space are nonlinear.  Thus, instead of sampling from the multivariate normal, we update each component of $\bs{\mu}$ conditional on the others.  

The assortativity constraint can be expressed in terms of $\bs{\mu}$ as
$$\mu_1 -  2 e^{\mu_2} \frac{\Gamma(\frac{D+1}{2})}{\Gamma(\frac{D}{2})} \ge  \psi(a_0) - \psi(b_0).$$  Updating $\bs{\mu}$ thus involves two steps:
\begin{enumerate}
\item Update $\mu_1^{(t+1)}$ subject to $$\mu_1^{(t+1)}\geq f_1(a_0,b_0,\mu_2^{(t)}) \equiv \psi(a_0) - \psi(b_0) -  2 e^{\mu_2^{(t)}} \frac{\Gamma(\frac{D+1}{2})}{\Gamma(\frac{D}{2})}$$ by drawing from the truncated normal \[\mu_1^{(t+1)}\sim N\left( \widetilde{m}_1+\frac{\widetilde{\sigma}_1}{\widetilde{\sigma}_2}\widetilde{\rho} \big(\mu_2^{(t)}-\widetilde{m}_2\big),\big(1-\widetilde{\rho}^2\big)\widetilde{\sigma}_1^{2}\right) {\bf 1}\left(\mu_1 \ge f_1(a_0,b_0,\mu_2^{(t)}) \right)\]
\item Update $\mu_2^{(t+1)}$ subject to $$\mu_2^{(t+1)} \leq f_2(a_0,b_0,\mu_1^{(t)}) \equiv \log \left( -\frac{\Gamma(\frac{D}{2})}{2\Gamma(\frac{D+1}{2})} \left(\psi(a_0) - \psi(b_0)-\mu_1^{(t)}\right) \right)$$ 
by drawing from the truncated normal \[\mu_2^{(t+1)}\sim N\left(\widetilde{m}_2+\frac{\widetilde{\sigma}_2}{\widetilde{\sigma}_1}\widetilde{\rho} \big(\mu_1^{(t)}-\widetilde{m}_1\big),\big(1-\widetilde{\rho}^2\big)\widetilde{\sigma}_2^{2}\right) {\bf 1}\left(\mu_2 \le f_2(a_0,b_0,\mu_1^{(t)}) \right).\]
\end{enumerate}
We obtained truncated normal draws using the \emph{rtnorm} function in the \emph{msm} package~(\citet{ jackson2011multi}).

\section{Sampler post-processing}

The likelihood is invariant to permutations of the block memberships and to rotations and reflections of the latent spaces. We post-process the MCMC samples to resolve these non-identifiabilities. 

The invariance of the likelihood to permutations of the block memberships is called label-switching and it has been studied extensively \citep[eg.][]{jasra2005labelswitching,rodriguez2014labelswitching}. In order to handle the label-switching invariance of the block memberships, we first fix a membership vector $\bs{\gamma}_0 = (\gamma^0_{1},....\gamma^0_{N})$ toward which to permute the memberships. While multiple choices for $\bs{\gamma}_0$ are possible, we use a random membership vector from one of the chains. For each membership sample $\bs{\gamma}_s = (\gamma_1^{(s)},...,\gamma_N^{(s)})$ we use the Hungarian algorithm to find a permutation $\omega$ of $1, \dotsc, K$ that maximizes the number of matching memberships between $\bs{\gamma}_0$ and $\bs{\gamma}_s$ relabelled according to $\omega$: $\sum_k \mathbf{1}\{\gamma^{(s)}_{\omega(k)} = \gamma^0_{k}\}$  (\citet{ papadimitriou1998combinatorial}). When there are multiple permutations maximizing this criterion we choose among them at random. Define $\bs{\tilde{\gamma}}_s$ to be $\bs{\gamma}_s$ relabelled according to the optimal permutation. 

To address the non-identifiability of the latent spaces, we adapt the ``Procrustes" transformation method used by \citet{HRH02}. For each block $k$, we first identify the set of all nodes $S_{k0}$ that have non-zero posterior probability of membership in $k$. We then construct a distance matrix $D_{k0}$ and a weight matrix $W_{k0}$ as follows. For every pair of nodes $i \neq j \in S_k$, let $(W_{k0})_{ij}$ be the number of posterior samples in which $i$ and $j$ were both members of block $k$. If $(W_{k0})_{ij} > 0$ then let $(D_{k0})_{ij}$ be the average distance between the positions of $i$ and $j$ for those samples in which they were both members of block $k$. Otherwise $(D_{k0})_{ij}$ is missing.

We then perform a weighted multidimensional scaling (MDS) of $D_{k0}$ with weights $W_{k0}$ to find the best $D$-dimensional representation of the average distances using SMACOF (\citet{ de2009multidimensional}). Call this representation $\mb{{Z}}_{k0}$. For each posterior sample $s$, call the members of block $k$ at this sample $S_{ks}$ and their positions $\mb{Z}_{ks}$. Let $\mb{\tilde{Z}}_{k0s}$ be the subset of $\mb{Z}_{k0}$ corresponding to $S_{ks}$. Compute the Procrustes transformation $\mb{Z}_{sk}$ toward $\mb{\tilde{Z}}_{k0s}$ the distance-preserving transformation of $\mb{Z}_{ks}$ (i.e.\ the transformation over all rotations, reflections, and shifts of $\mb{Z}_{sk}$ minimizing the sum of square distances between $\mb{Z}_{sk}$ and  $\mb{\tilde{Z}}_{k0s}$). 

\begin{figure}
    \centering
    \includegraphics[width=.75\textwidth]{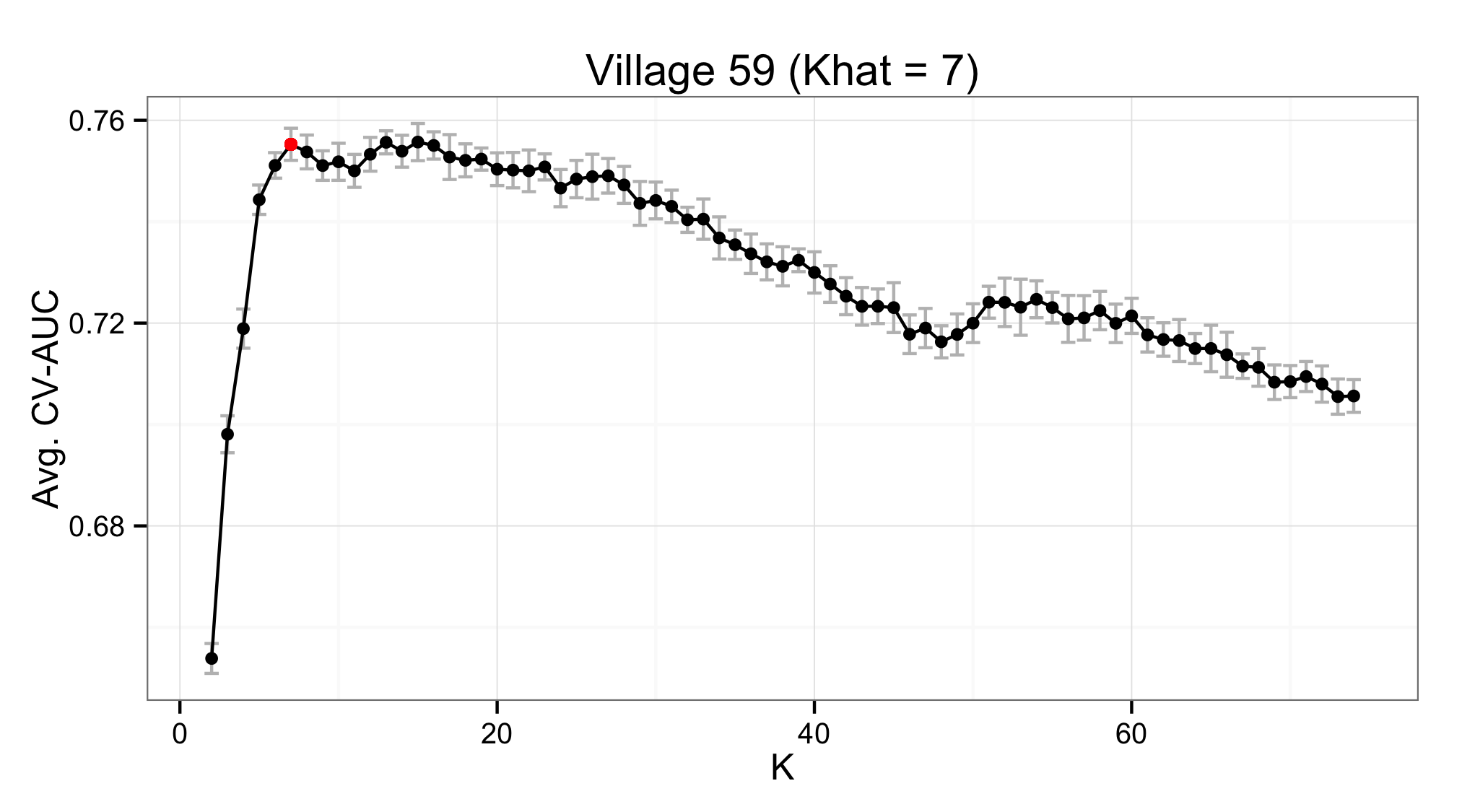}
        \includegraphics[width=.75\textwidth]{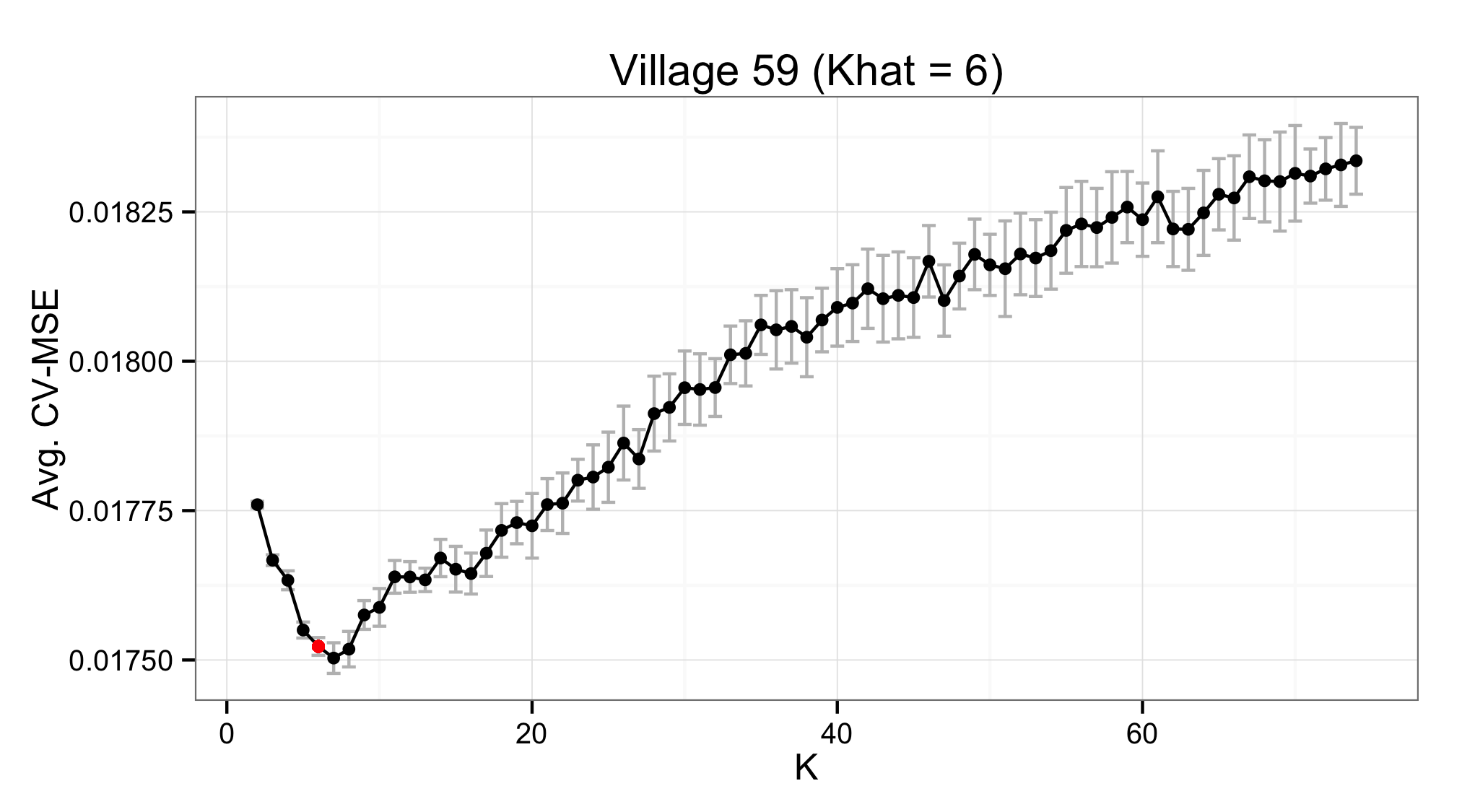}
    \includegraphics[width=.75\textwidth]{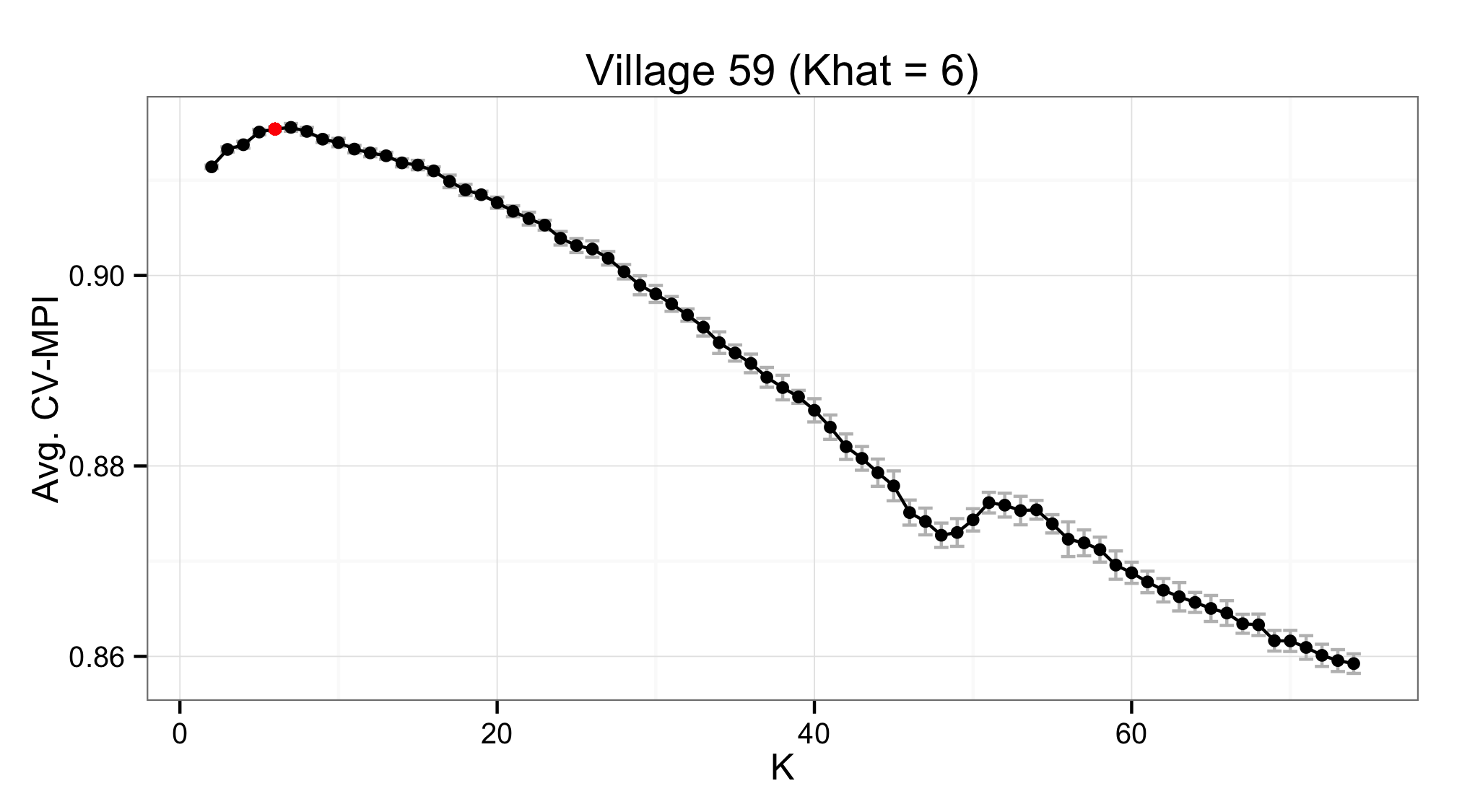}

    \caption{Diagnostic plot for choosing the number of clusters.}
\end{figure}

\section{Estimation of LS-SBM on Karnataka data}

We used the cross-validation scheme outlined in Section \ref{sec:fitting} to choose the number of blocks, partitioning the data 20 times to perform 10-fold cross validation. For both MSE and MPI, $K=6$ was the smallest $K$ value for which the estimated mean criteria was within the 95\% CI for the estimated mean criteria of the optimal value, while for AUC $K=7$ was the smallest such $K$. We chose $K=6$.

We ran four MCMC chains, each with 160,000 iterations. We kept every twentieth iteration and discarded the first quarter of each chain as burn-in.  We present traceplots and convergence diagnostics in the supplementary material.  We  fixed the hyperpriors on the between-block beta distribution to enforce global assortativity, as described in Section~\ref{sec:priors}.  Specifically, we fixed the $b_0$ parameter to be one and then chose $a_0$ such that the mean of the beta prior distribution is ten times the observed network density.  This choice produces a prior that has substantial mass between zero and about 0.4.  We also experimented with different choices for $a_0$ and $b_0$ and found that, while enforcing assortativity does have the desired impact, the substantive conclusions were similar for a wide range of choices for $a_0$ and $b_0$. Codes to replicate the results are available at \url{https://github.com/tedwestling/multiresolution_networks.git}. 

 We also evaluated the convergence across the four chains using the Gelman-Rubin statistic~(\citet{ gelman2014bayesian}) from the \emph{rstan} package~(\citet{ carpenter2016stan}). 
\begin{figure}
    \centering
    \includegraphics[width=.75\textwidth]{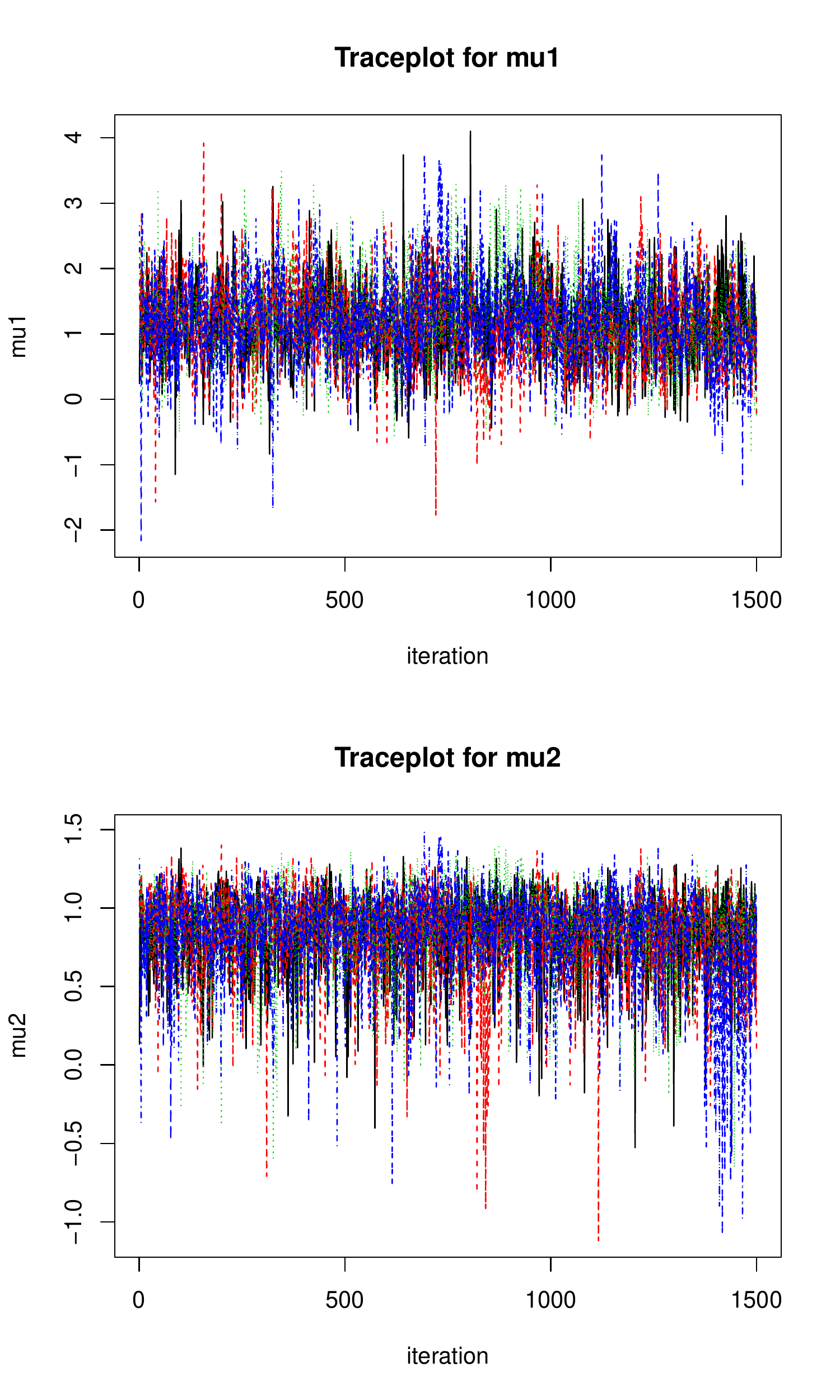}
    \caption{$\boldsymbol{\mu}$ trace}
\end{figure}

\begin{figure}
    \centering
    \includegraphics[width=.75\textwidth]{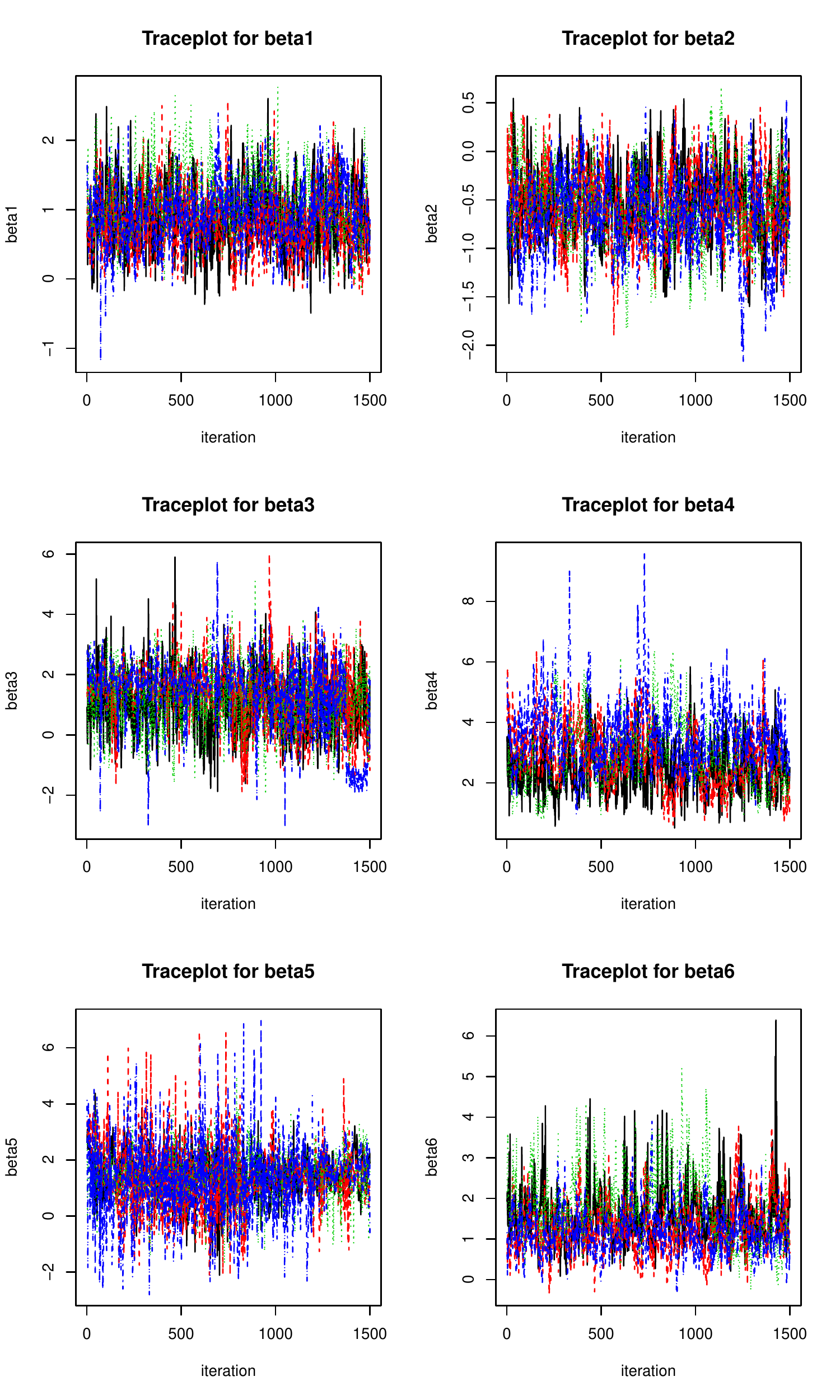}
    \caption{$\beta$ trace}
\end{figure}

\begin{figure}
    \centering
    \includegraphics[width=.75\textwidth]{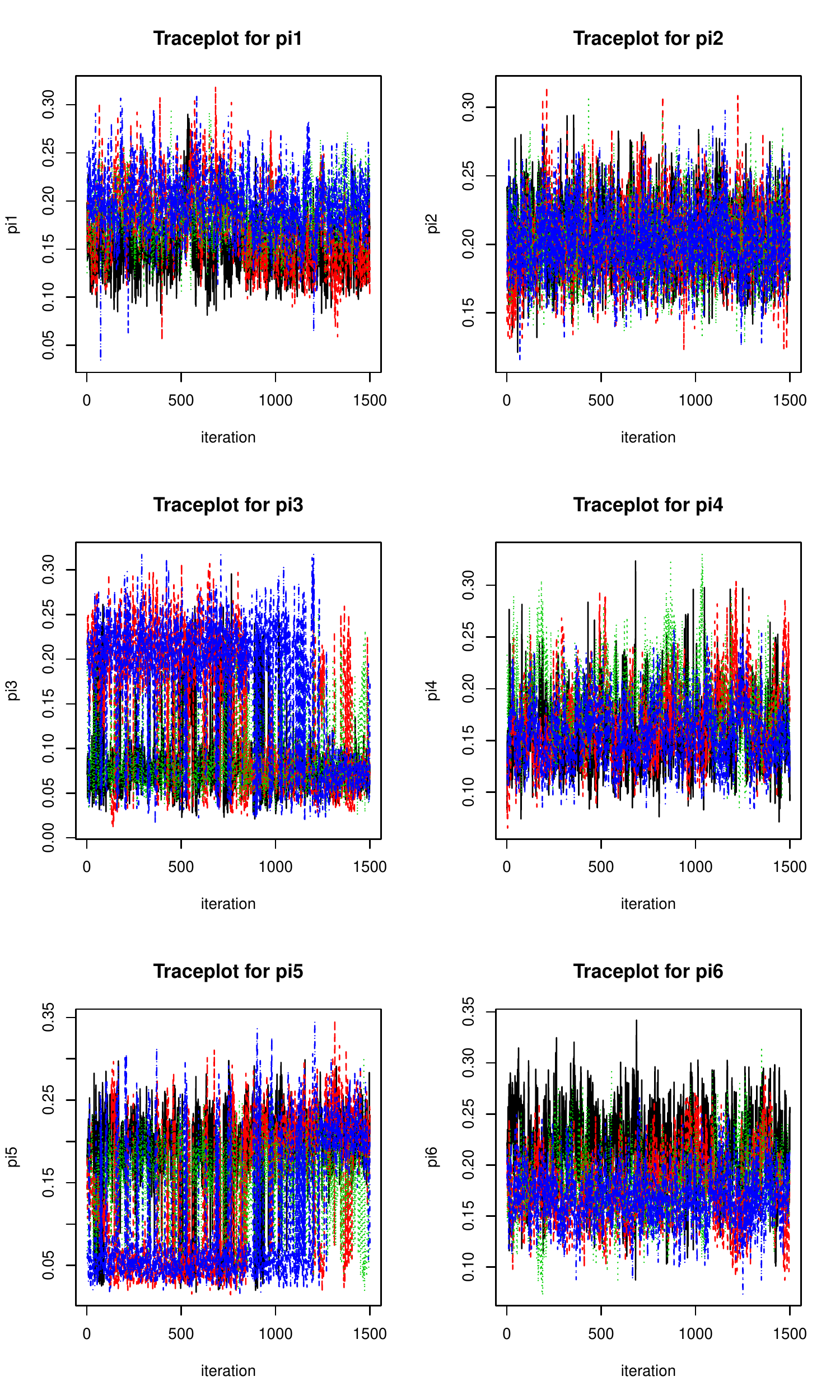}
    \caption{$\boldsymbol{\pi}$ trace}
\end{figure}

\begin{figure}
    \centering
    \includegraphics[width=.75\textwidth]{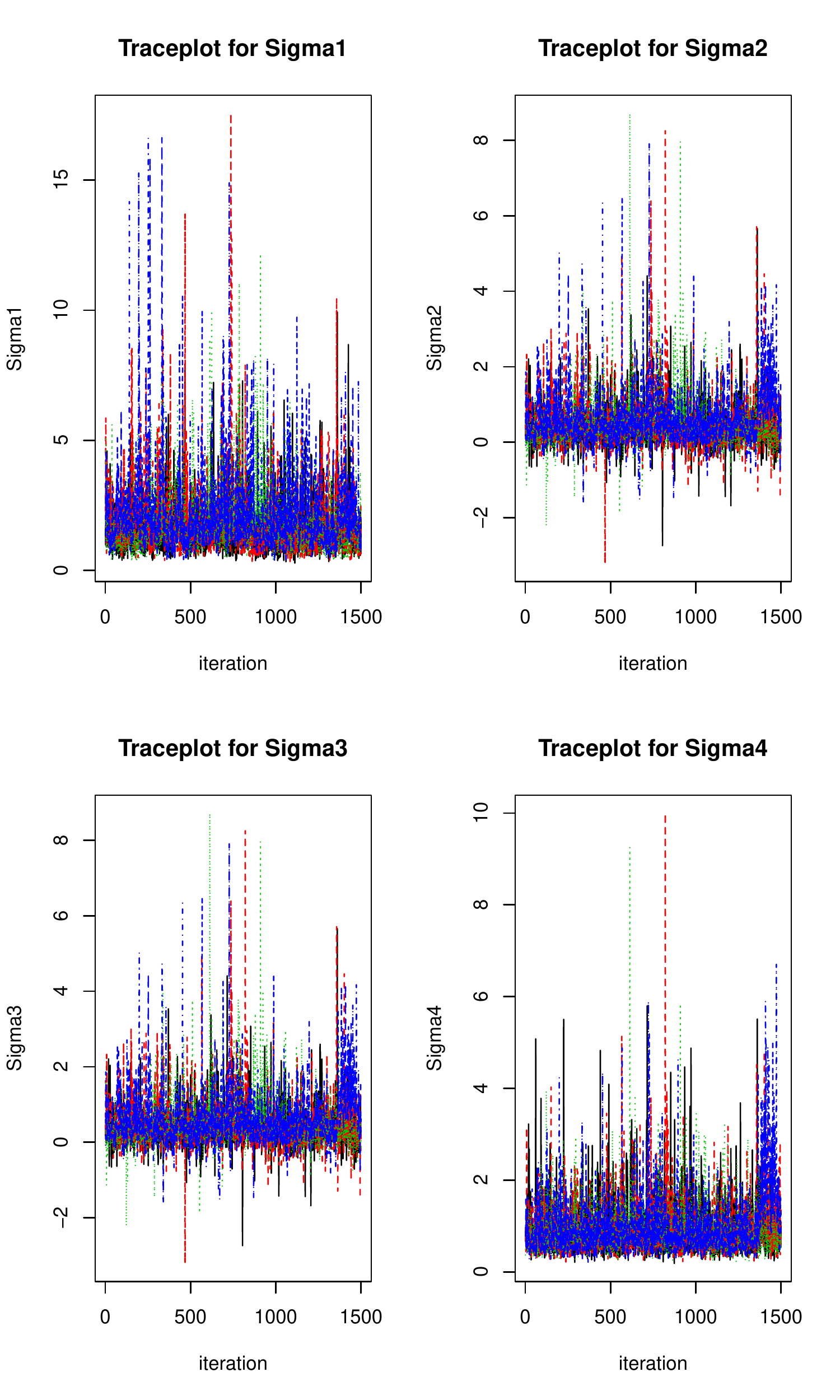}
    \caption{$\boldsymbol{\Sigma}$ trace}
\end{figure}
\newpage
\section{Simulation study details}

Each of the one thousand simulations in our simulation study generated a binary undirected network from the LS-SBM model on three hundred nodes and five equal-sized blocks as follows. The between-block probability matrix $B$ was:
\[ \mathbf{B} = \begin{pmatrix}- & 0.3 & 0.3 & 0.3 & 0.3 \\
0.3 & - & 0.3 & 0.3 & 0.3 \\
0.3 & 0.3 & - & 0.3 & 0.3 \\
0.3 & 0.3 & 0.3 & - & 0.3 \\
0.3 & 0.3 & 0.3 & 0.3 & - \end{pmatrix}. \]
Note that this between-block probability matrix cannot be  represented in a two-dimensional Euclidean latent space since it is impossible to place five points equidistant apart. The within-block two-dimensional latent space intercept $\boldsymbol\beta$ and scale $\boldsymbol\sigma$ were:
\begin{align*}
    \boldsymbol\beta &= (0.6, 2.0, 2.1, 4.0, 4.0) \\
\boldsymbol\sigma &= (0.4, 0.8, 1.2, 1.6, 2.0).
\end{align*} 

\section{Approximate computation algorithm}
\subsection{Two-stage approximate algorithm}

The approximate estimation occurs in two stages: 1) the stochastic blockmodel, and 2) the within-block latent spaces.

The goal of the first stage of the approximate algorithm is to quickly partition the nodes in to assortative clusters as in a stochastic blockmodel. Any graph clustering algorithm can in principle be used for this step. For small- and medium-sized networks (e.g.\ under 1,000 nodes), we have found a variant of spectral clustering designed to find assortative clusters to work well~\citep{saade2014spectral}. This is the algorithm we used in the analysis of the single Karnataka village and the simulation studies presented in the main text. For large networks, we have found label propagation~\citep{raghavan2007near} to scale well while still returning meaningful assortative clusters. This is the algorithm we used to estimate our model on all 75 Karnataka village networks combined (a total of 13,009 nodes).

For the second stage of the two-stage approximation, we estimate each within-block latent space for the blocks identified by spectral clustering using a variational Bayes algorithm. See appendix \ref{app:var_alg} for the details of this algorithm.

\subsection{Variational approximation}
 \label{app:var_alg}

For the variational Bayes estimation of the latent space in the two-stage approximate procedure, we adapt the algorithm developed in \cite{salter2013variational}.  Since the latent space associated with each block is estimated independently, we describe the algorithm below for a single block and omit the block subscript $k$ on the block-level parameters $\beta$ and $\sigma^2$.

Let $\tau = \sigma^{-2}$. For a prior distribution we set $\tau \sim \n{Gamma}(a_0, b_0)$ and $\beta \sim N(m_0, t_0^{-1})$. We use a fixed-form variational approximation with variational family of posterior distributions defined by  $\tau \sim \n{Gamma}(a, b)$, $\beta \sim N(m, t^{-1})$, $\mb{Z}_i \sim N_D(\ell_i, s_i^{-1} I_D)$, and $\tau, \beta, \mb{Z}_1, \dotsc, \mb{Z}_n$ independent. Denote the full set of free variational parameters $\psi$.

The standard variational criterion function, known as the ELBO, is not available in closed form for this approximation. We use the same first order Taylor series approximation to the part of the ELBO concerned with the likelihood of $Y_{ij}$ as in \cite{salter2013variational}:
\begin{align*}
\sum_{i,j} E_{\psi}\left[ \log p(Y_{ij} | \mb{Z}_i, \mb{Z}_j, \beta) \right] &\approx \sum_{i,j} \left[Y_{ij} \eta_{ij} - \log( 1+ \exp(\eta_{ij})) \right] 
\end{align*}
for
\[\eta_{ij} = m + t^{-1}/2 - (\|\ell_i - \ell_j\|^2 + (s_i^{-1} + s_j^{-1})d)^{1/2}.\]
The next part of the ELBO concerns the conditional distribution of $Z_i$:
\begin{align*}
\sum_i E_{\psi}\left[ \log \frac{p(\mb{Z}_i | \tau)}{q(\mb{Z}_i | \ell_i, s_i)}\right] &\propto \sum_i E_{\eta}\left[ \tfrac{1}{2}\log \tau  - \tfrac{\tau}{2N} \|\mb{Z}_i\|^2 - \tfrac{1}{2}\log s_i + \tfrac{s_i}{2}\| \mb{Z}_i - \ell_i\|^2\right]  \nonumber\\
&\propto  \sum_i \left[ \tfrac{1}{2}\psi(a) - \tfrac{1}{2}\log b - \tfrac{a}{2Nb}(ds_i^{-1} + \|\ell_i\|^2)- \tfrac{1}{2}\log s_i)\right].
\end{align*}
Finally, the KL divergence between prior and posterior:
\begin{align*}
E_{\eta}\left[ \log \frac{ p(\tau | a_0, b_0)}{q(\tau | a, b)} + \log \frac{p(\beta | m_0, t_0)}{q(\beta | m, t)} \right] &= E_{\eta}\left[ a_0\log b_0 -a\log b - \log \Gamma(a_0) + \log\Gamma(a)  \right. \nonumber \\
&\qquad \left. + (a_0 - a) \log \tau - (b_0 - b) \tau +\tfrac{1}{2} \log t_0 - \tfrac{1}{2}\log t \right. \nonumber \\
&\qquad \left.  -\tfrac{t_0}{2} (\beta - m_0)^2 + \tfrac{t}{2}(\beta - m)^2 \right]\nonumber \\
&\propto -a \log b + \log \Gamma(a) + (a_0 - a)(\psi(a) - \log b) \nonumber \\
&\qquad  - (b_0 - b) \tfrac{a}{b} - \tfrac{1}{2} \log t -\tfrac{t_0}{2} [(m - m_0)^2 + t^{-1}].
\end{align*}
We use a BFGS algorithm to maximize the ELBO with respect to $m, t, a, b, \ell$ and $s$. Denoting the ELBO $L$ and differentiating with respect to $m$ and $t$ gives:
\begin{align*}
\frac{\partial L}{\partial m}&= \sum_{i,j}\left[ Y_{ij} - \n{logit}^{-1}(\eta_{ij})\right] -t_0(m - m_0)\\
\frac{\partial L}{\partial t} &= \frac{1}{2t^2}\left(t_0+ \sum_{i,j}\n{logit}^{-1}(\eta_{ij})\right) - \frac{1}{2t}.
\end{align*} 
For $a$ and $b$ we have the derivatives
\begin{align*}
 \frac{\partial L}{\partial a}&= \psi'(a)(a_0 +\tfrac{N}{2} - a ) - b^{-1}\left(b_0 + \tfrac{1}{2N}\Sigma_i(\tfrac{D}{s_i} + \|\ell_i\|^2)\right) + 1. \\
 \frac{\partial L}{\partial b} &= -(a_0 + \tfrac{N}{2})b^{-1} +(a b_0 + \tfrac{1}{2N}\Sigma_i a(ds_i^{-1} + \|\ell_i\|^2))b^{-2}
\end{align*}
from which we get the closed form solution $\hat{a} = a_0 + \tfrac{N}{2}$. Finally for $\ell_i$ and $s_i$ we have
 \begin{align*}
  \frac{\partial L}{\partial \ell_{ik}} &= \sum_j \left[ -(\ell_{ik} - \ell_{jk})(\|\ell_i - \ell_j\|^2 + D(s_i^{-1} + s_j^{-1}))^{-1/2} \left(Y_{ij} - \n{logit}^{-1}(\eta_{ij})\right)\right] - \tfrac{a}{Nb}\ell_{ik} \\
\frac{\partial L}{\partial s_i} &= \sum_j \left[\tfrac{1}{2}Ds_i^{-2}(\|\ell_i - \ell_j\|^2 + D(s_i^{-1} + s_j^{-1}))^{-1/2} \left(Y_{ij} - \n{logit}^{-1}(\eta_{ij})\right)\right]  + \frac{ad}{2Nbs_i^2} - \frac{1}{2s_i}.
  \end{align*}

\end{document}